\DeclareSymbolFontAlphabet{\mathbbl}{bbold}
\newtheorem{theorem}{Theorem}[section] 
\newtheorem{lemma}[theorem]{Lemma}      
\newtheorem{corollary}{Corollary}[theorem] 
\title{Extremile scalar-on-function regression}
\newcommand{\smallo}{{o}}
\newcommand{\bigo}{\mathcal{O}}
\newcommand{\ProjK}{\mathcal{P}_K}
\newcommand*\de{\mathop{}\!\mathrm{d}}
\DeclareMathOperator*{\argmin}{arg\,min}
\DeclarePairedDelimiter\floor{\lfloor}{\rfloor}
\author[1]{Maria Laura Battagliola}
\author[2]{Martin Bladt}
\affil[1]{Department of Statistics, Instituto Tecnol\'ogico Aut\'onomo de M\'exico, Mexico City, Mexico}
\affil[2]{Department of Mathematical Sciences, University of Copenhagen, Copenhagen, Denmark}
\date{\today}
\begin{document}

\maketitle

\begin{abstract}
Extremiles provide a generalization of quantiles which are not only robust, but also have an intrinsic link with extreme value theory. This paper introduces an extremile regression model tailored for functional covariate spaces. The estimation procedure turns out to be a weighted version of local linear scalar-on-function regression, where now a double kernel approach plays a crucial role. Asymptotic expressions for the bias and variance are established, applicable to both decreasing bandwidth sequences and automatically selected bandwidths. The methodology is then investigated in detail through a simulation study. Furthermore, we illustrate the method’s applicability with an analysis of the Berkeley Growth data, showcasing its performance in a real-world functional data setting.
\end{abstract}

\section{Introduction}
\label{sec:intro}

Extreme Value Theory (EVT) provides a robust framework for studying rare yet consequential events, making it indispensable for understanding phenomena characterized by extreme behaviors. By offering tools to detect and analyze such events, EVT is critical for applications ranging from risk management to mitigation strategies. For instance, extreme climate events have enormous impacts on the environmental and socioeconomic balance. Thus, EVT plays a pivotal role in addressing emerging global challenges.

Quantile regression \citep{koenker1978} is a foundational statistical method that estimates and predicts quantiles of responses conditional on covariates. Expectiles, introduced by \cite{newey1987}, represent another measure of extremes, though they lack the straightforward interpretability of quantiles. Efforts to link expectiles with generalized quantiles \citep{jones1994, bellini2014} have added to their theoretical development. Extremiles, introduced by \cite{daouia2019}, differ by representing the expected maximum or minimum of independent realizations from the conditional distribution of responses given covariates. Their natural interpretation aligns them conceptually with EVT, offering greater robustness than quantiles, particularly when data becomes scarce.

Another layer of complexity arises when studying continuous phenomena. Functional Data Analysis (FDA) has been a staple framework for analyzing such data, treating them as continuous functions \citep{zhang2011, bonner2014modeling, fraiman2014, suhalia2017, tapia2022, villani2024}. However, the infinite-dimensional nature of functional data introduces unique challenges, requiring targeted methodologies. Recent advancements have successfully integrated EVT and FDA, with scalar-on-function quantile regression \citep{guo2015, yu2016, li2022} and expectile regression \citep{mohammedi2021, Rachdi22} addressing these complexities.

In this paper, we propose a novel statistical model that extends extremile regression \citep{Daouia21} to accommodate functional covariates. Departing from classical functional regression approaches, we develop an interpretable and robust methodology based on a weighted version of scalar-on-function local linear regression \citep{baillo2009, barrientos2010, boj2010}. The weights depend on the extremile level and an estimate of the conditional CDF, which we obtain using Nadaraya--Watson-type estimators with bandwidth selection as per \cite{chagny2014}. 

Our methodology is applied to the classic Berkeley Growth dataset, extending the local linear functional regression analysis of \cite{ferraty22} on the same data. We move beyond their mean-based approach to explore the full conditional distribution, analyzing the relationship between childhood growth velocity profiles and the extremiles of final adult height. Our analysis shows how early growth dynamics can signal not just average outcomes, but also the potential for exceptionally tall or short stature. Moreover, we compare the results for boys and girls, uncovering very different growth behaviours between genders.

The manuscript is structured as follows: Section \ref{sec:preliminaries} provides an overview of extremile regression and scalar-on-function regression. Section \ref{sec:ExtrFunReg} presents the methodological framework, including asymptotic results (proofs in the appendix). Section \ref{sec:simulations} evaluates our approach via simulations, and Section \ref{sec:application} consists in an application of our method on the  Berkeley Growth data. Finally, Section \ref{sec:discussion} summarizes findings and outlines future research directions. The proofs of the theoretical results can be found in the appendix, and the code to run the analysis is provided in the supplementary materials.

\section{Preliminaries}
\label{sec:preliminaries}
This section provides some preliminaries on conditional extremiles and local linear regression with functional covariates. Its primary aim is to provide notation and intuition for the construction of our main model, which has elements of both approaches.
\label{sec:background}
\subsection{Extremile regression for multivariate covariates}
\label{sec:ext_reg_scalar}
Consider the estimation of conditional extremiles on non-functional covariates, cf. \cite{Daouia21}. Specifically, consider the random variables $Y\in \mathbb{R}$ and $X \in \mathbb{R}^d$, and call $$F(y|x)=\Pr(Y\le y|X=x)$$ the conditional CDF of $Y$ given $X=x$. Moreover, assume that $\mathbb{E}(|Y||X=x)<\infty$.  For any fixed $\tau \in (0,1)$, the \textit{conditional extremile of order $\tau$} of $Y$ given $X=x$ is defined as
\begin{equation}
    \label{eq:cond_extremile}
    \xi_\tau(x) = \argmin_{\theta \in \mathbb{R}} \mathbb{E} \big(J_\tau(F(Y|X))\cdot[|Y-\theta|^2 - |Y|^2] | X=x \big),
\end{equation}
where $J_\tau(\cdot) = K^{'}_\tau(\cdot)$, with
\begin{align*}
    K_\tau(t) = \begin{cases}
1-(1-t)^{s(\tau)} &\text{if $\tau\in(0,1/2]$}\\
t^{r(\tau)} &\text{if $\tau\in[1/2,1)$}
\end{cases}
\end{align*}
for $t\in[0,1]$, and $r(\tau) = s(1-\tau) = \log(1/2)/\log(\tau)$.
Moreover, we have that
\begin{equation}
    \label{eq:cond_extremile1}
    \xi_\tau(x) = \mathbb{E}\big[Y J_\tau(F(Y|X))| X=x\big].
\end{equation}

An alternative representation, first given by \cite{daouia2019} for the marginal extremile and extended to the conditional extremile \eqref{eq:cond_extremile1} by \cite{Daouia21}, is
\begin{align}
 \label{eq:cond_extremile1_maxdef}
   \xi_\tau(x) = \begin{cases}
\mathbb{E}\left(\min\left(Y^x_1,\dots,Y^x_{\floor*{s(\tau)}}\right)\right) &\text{if $\tau\in(0,1/2]$}\\
\mathbb{E}\left(\max\left(Y^x_1,\dots,Y^x_{\floor*{r(\tau)}}\right)\right) &\text{if $\tau\in[1/2,1)$}
\end{cases}
\end{align}
where $Y^x_1,\dots,Y^x_{\floor*{s(\tau)}}$ and $Y^x_1,\dots,Y^x_{\floor*{r(\tau)}}$ are $\floor*{s(\tau)}$ and $\floor*{r(\tau)}$ independent and identically distributed copies drawn from the conditional distribution of $Y$ given $X=x$. As $\tau \uparrow 1$, and thus $r(\tau) \to \infty$, the number of copies of $Y$ increases, and so does the average of their maximum. In a symmetrical way, when $\tau \downarrow 0$ and $s(\tau) \to \infty$, $\xi_\tau(x)$
decreases. 

Consider now an independent and identically distributed sample $(Y_i, X_i)_{i=1,..,n}$. A local linear approximation to \eqref{eq:cond_extremile}
is obtained for any $h>0$ by
\begin{equation}
\label{eq:est_pb}
 (\hat{\alpha}_\tau, \hat{\beta}_\tau) = \argmin_{\alpha,\beta \in \mathbb{R}} \sum_{i=1}^n J_\tau(\hat{F}(Y_i|x))\{Y_i - \alpha - \beta(x - X_i) \}^2 L \left(\frac{x - X_i}{h} \right),
\end{equation}
where $\hat{F}(\cdot|x)$ is any uniformly consistent estimator of $F(\cdot|x)$, and $\hat{\alpha}_\tau$ is the estimator of $\xi_\tau$.
Notice that \eqref{eq:est_pb} is a weighted least squares problem solving the locally linearized check function $\rho_\tau(v) = v(\tau - {1}(v \leq 0))$, using kernel $L$ with bandwidth sequence $h=h_n$ such that $h_n \to0$ as $n\to\infty$. The solutions to \eqref{eq:est_pb} turn out to have a closed form, and \cite{Daouia21} showed asymptotic unbiasedness and normality of the extremile estimator, under further regularity conditions.

\subsection{Local linear scalar-on-function regression}
\label{sec:llfr}
Consider now random variables $Y\in \mathbb{R}$ and a random element $X(\cdot) \in E$, where $(E, d)$ is a semi-metric space with semi-metric $d$. The choice of $d$ depends on the problem one aims at solving, as is explained in detail in \cite[Chapter 3]{Ferraty06}. However, in most applications $E=L^2(\mathcal{S})$, where $\mathcal{S} \subset \mathbb{R}$, and $d^2(X_1, X_2) ={ \int_S (X_1(s) - X_2(s))^2 \text{d}s}$ for any $X_1(\cdot),X_2(\cdot) \in E$; we also adopt such functional space. Hence, $\langle X_1, X_2 \rangle = \int_\mathcal{S} X_1(s)X_2(s) \text{d}s$ and $||X_1||^2 = {\int_\mathcal{S} X_1^2(s) \text{d}s}$. 

A local linear mean regression model for functional regressors and scalar responses was proposed in \cite{ferraty22}. Such type of regression is nonparametric and models a linear relationship between response and covariates in a subset of $E$. In particular, given $x(\cdot)\in E$ and independent and identically distributed observations $(Y_i, X_i(\cdot))_{i=1,..,n}$, the scalar-on-function local linear mean regression model
around $x$ is given by
\begin{equation}
\label{eq:llfr}
   (\hat{\alpha}, \hat{\beta}) = \argmin_{\alpha\in \mathbb{R}, \,\beta \in L^2(\mathcal{S})} \sum_{i=1}^n\{Y_i - \alpha - \langle \beta, x-X_i \rangle \}^2 L \left(\frac{||x - X_i||}{h} \right).
\end{equation}
The minimizers in \eqref{eq:llfr} correspond to the estimators of the local linear mean and its functional derivative in $x(\cdot)$, namely $\hat{\alpha}$ and $\hat{\beta}$, respectively. Notice that, under the assumption that the first order Taylor expansion of the estimator around $x(\cdot)$ is valid, the functional derivative is the local linear approximation of the mean regression operator evaluated in a neighborhood of $x$. Hence, analysing the properties of $\hat{\beta}$ leads to a better understanding of the robustness of the estimator $\hat{\alpha}$. Notice that even though $\hat{\alpha}, \hat{\beta}$ depend on $x(\cdot)$, $h$ and $n$, we avoid specifying this dependence in the notation for ease of exposition.

To reduce the infinite-dimensional problem in \eqref{eq:llfr}, one can approximate the unknown functional coefficient $\beta$ by projecting it onto a finite-dimensional subspace. 
More specifically, one may take
\begin{equation}
\label{eq:beta_approx}
    \beta \approx \ProjK(\beta) = \sum_{k=1}^K b_k \phi_k,
\end{equation}
where $\ProjK(\cdot)$ indicates the projection onto subspace $\mathcal{F}^K$ of $L^2(\mathcal{S})$, and $\{ \phi_k(\cdot)\}_{k=1}^K$ are $K$ orthonormal basis functions spanning $\mathcal{F}^K$, with $ b_k = \langle \phi_k, \beta \rangle $. 
While various orthonormal bases like B-splines or Fourier series can be used, \cite{ferraty22} highlight the advantages of a data-driven approach using Functional Principal Component Analysis. In this setting, the basis functions $\{ \phi_k(\cdot)\}_{k=1}^K$ are the eigenfunctions corresponding to the $K$ largest eigenvalues of the empirical covariance operator of the sample curves $\{X_i(\cdot) \}_{i=1}^n$. This choice is powerful because it provides an optimal basis for representing the primary modes of variation in the data.

We can now consider finite dimensional version of \eqref{eq:llfr} which only relies on scalar coefficients, namely
\begin{equation}
    \label{eq:llfr_approx}
 (\hat{\alpha}, \hat{b}_1,\dots,\hat{b}_K)   =  \argmin_{\alpha,\, b_1,\dots, b_K \in \mathbb{R}} \sum_{i=1}^n \Big\{Y_i - \alpha - \sum_{k=1}^K b_k \langle \phi_k, x-X_i \rangle \Big\}^2 L \left(\frac{||x - X_i||}{h} \right).
\end{equation}

\section{Extremile scalar-on-function regression}
\label{sec:ExtrFunReg}


In this section we introduce our model, which essentially combines the methodologies presented in Section~\ref{sec:background}, defining a framework for extremile scalar-on-function regression. In particular, we derive the closed form of the extremile estimator of order $\tau$. Following the approach of \cite{ferraty22}, which highlights the theoretical challenges of the functional setting, we then study its fundamental asymptotic properties by deriving the rates of convergence for its bias and variance. Throughout this section we use and adapt technical conditions introduced in \cite{Ferraty10} and \cite{ferraty22}, see conditions (A1)-(A12) in the Appendix. It should be noted that the supplementary material of \cite{ferraty22} provides a detailed theoretical discussion of these conditions, including situations where they are fulfilled (e.g., for non-degenerate Gaussian processes) and an analysis showing that key assumptions hold in general situations. Proofs of all theorems are provided in the appendix.

\subsection{Setting and problem formulation}

Given the random variable $Y\in \mathbb{R}$ and random element $X(\cdot) \in E$, we use the same notation as in \eqref{eq:cond_extremile1} to write the conditional extremile, namely
$\xi_\tau(x) = \mathbb{E}\big[Y J_\tau(F(Y|X))| X=x\big],$
where $F(\cdot|x)$ the conditional CDF of the scalar response $Y$ given the functional covariate $X=x$. Notice that the expression \eqref{eq:cond_extremile1_maxdef} carries over to this framework as well, with  $Y^x_1,\dots,Y^x_{\floor*{s(\tau)}}$ and $Y^x_1,\dots,Y^x_{\floor*{r(\tau)}}$ independent copies from the conditional distribution of $Y$ given the functional $X=x$.

Consider now independent and identically distributed sample $(Y_i, X_i(\cdot))_{i=1,..,n}$. 
In order to adapt \eqref{eq:est_pb} to functional covariates, we require an estimator of the conditional CDF that uniformly converges to $F(\cdot|x)$. For this purpose, we use the Nadaraya--Watson-type estimator suggested by \cite{Ferraty10}, who showed that 
\begin{equation}
    \label{eq:F_hat_fct}
    \hat{F}_{h_{F}}(y|x) = \frac{\sum_{i=1}^n L_F \left(\frac{||x-X_i||}{h_F}\right) {1}(Y_i \leq y) }{\sum_{i=1}^n L_F \left(\frac{||x-X_i||}{h_F}\right)}.
\end{equation}
is a uniformly consistent estimator under conditions (A6)--(A10) in 
the appendix.

Now that an appropriate estimator for the conditional CDF is established, we formulate the finite dimensional minimisation problem, i.e.

\begin{align}
   & \label{eq:est_pb_fct_approx}
  (\hat{\alpha}_\tau, \hat{b}_{\tau,1},\dots,\hat{b}_{\tau,K})\\
  & = \argmin_{\alpha, b_1,\dots, b_K \in \mathbb{R}} \sum_{i=1}^n J_\tau(\hat{F}_{h_{F}}(Y_i|x))\left\{Y_i - \alpha - \sum_{k=1}^K b_k \langle \phi_k, x-X_i \rangle \right\}^2 L \left(\frac{||x - X_i||}{h} \right),\nonumber
\end{align}
where we have used \eqref{eq:beta_approx} for the finite approximation of the functional coefficient $\beta(\cdot) \in L^2(\mathcal{S})$. 
Notice that \eqref{eq:est_pb_fct_approx} is a weighted version of \eqref{eq:llfr_approx} with weights that depend on extremile level $\tau$. In particular, $J_{0.5}(\hat{F}_{h_{F}}(Y_i|x))=1$ for every $i=1,\dots,n$, and so we may recover \eqref{eq:llfr_approx} as a special case corresponding to $\tau = 0.5$.

The explicit solutions can also be extended from the ones of extremile regression with scalar data. We define
$$\boldsymbol{\Phi} = \begin{pmatrix}
1 & \langle \phi_1, x-X_1 \rangle & \dots & \langle \phi_K, x-X_1 \rangle \\
    \vdots & \vdots & & \vdots\\
    1 &   \langle \phi_1, x-X_n \rangle & \dots &  \langle \phi_K, x-X_n \rangle
\end{pmatrix},$$
and hence the solution takes form
\begin{equation}
\label{eq:WLS}
    \begin{pmatrix} \hat{\alpha}_{\tau}\\  \hat{b}_{\tau,1} \\ \vdots \\ \hat{b}_{\tau,K} \end{pmatrix} = (\boldsymbol{\Phi}^{\top} \mathbf{W}_\tau \boldsymbol{\Phi})^{-1}\boldsymbol{\Phi}^{\top} \mathbf{W}_\tau\mathbf{Y},
\end{equation}
where $\mathbf{Y} = (Y_1,\dots,Y_n)^{\top}$ and $\mathbf{W}_\tau= \mathbf{W} \mathbf{J}_\tau(F(\mathbf{Y}|x))$, with $\mathbf{W} = \mbox{diag} \left( L(h^{-1}||x - X_i|| )\right)_{i=1,..,n}$ and  $\mathbf{J}_\tau(F(\mathbf{Y}|x)) = \mbox{diag}(J_\tau(F(Y_i|x)))_{i=1,..,n}$. 
For the subsequent asymptotic analysis in Sections~\ref{sec:bias} and \ref{sec:var}, we use the true conditional CDF, $F$ in the definition of the weight matrix $\mathbf{W}_\tau$. This is a standard approach in such two-stage estimation problems. Under our assumption that $\hat{F}_{h_{F}}$ is a uniformly consistent estimator (Assumptions A8-A12), the error introduced by this plug-in estimation of the conditional CDF is of a smaller asymptotic order and does not affect the leading-order terms of the bias and variance we derive.

From \eqref{eq:WLS} trivially follows that
\begin{equation}
   \label{eq:alpha_WLS} 
   \hat{\alpha}_{\tau} = \boldsymbol{e}^{\top}({\boldsymbol{\Phi}}^{\top} \mathbf{W}_\tau \boldsymbol{\Phi})^{-1}{\boldsymbol{\Phi}}^{\top} \mathbf{W}_\tau \mathbf{Y},
\end{equation}
and
\begin{equation}
 \label{eq:bs_WLS} 
    \begin{pmatrix} \hat{b}_{\tau,1} \\ \vdots \\ \hat{b}_{\tau,K} \end{pmatrix} = [\boldsymbol{0}_K | I_K]({\boldsymbol{\Phi}}^{\top} \mathbf{W}_\tau \boldsymbol{\Phi})^{-1}{\boldsymbol{\Phi}}^{\top} \mathbf{W}_\tau \mathbf{Y},
\end{equation}
where $\boldsymbol{0}_K = (0,\dots,0)^{\top}\in \mathbb{R}^K$, $\boldsymbol{e} = [1,\boldsymbol{0}_K^{\top}]^{\top} \in \mathbb{R}^{K+1}$ and $I_K$ is the identity matrix of dimension $K$. We denote with $[\boldsymbol{0}_K | I_K] \in \mathbb{R}^{K \times K+1}$ the matrix that has $\boldsymbol{0}_K$ as first column and coincides with $I_K$ for the remaining columns.


\subsection{Asymptotic bias}
\label{sec:bias}
Let the assumptions (A2)--(A5) stated in the appendix hold. From \eqref{eq:alpha_WLS}, we consider the conditional expectation of $\hat{\alpha}_\tau$ given the sample of covariates $\boldsymbol{X}$. Since the weight matrix $\mathbf{W}_\tau = \mathbf{W} \mathbf{J}_\tau$ is random in both $\boldsymbol{X}$ and $\mathbf{Y}$, the expectation conditional on $\boldsymbol{X}$, i.e.
$$\mathbb{E}(\hat{\alpha}_\tau|\boldsymbol{X}) = \mathbb{E} \left[ \boldsymbol{e}^{\top}({\boldsymbol{\Phi}}^{\top} \mathbf{W}_\tau \boldsymbol{\Phi})^{-1}{\boldsymbol{\Phi}}^{\top} \mathbf{W}_\tau \mathbf{Y} \Big| \boldsymbol{X} \right],$$
must be handled carefully.
The analysis of this expectation follows a two-step logic. First, we analyze the "inner" expectation of the product $\mathbf{W}_\tau \mathbf{Y}$ by averaging over the randomness of $\mathbf{Y}$ while holding $\boldsymbol{X}$ fixed. For each observation $i$, the $i$-th element of this vector is
$$\mathbb{E} \left[L(h^{-1}|x-X_i|) J_\tau(F(Y_i|x)) Y_i \Big| X_i \right] =  L(h^{-1}|x-X_i|)  \mathbb{E} \left[J_\tau(F(Y_i|x)) Y_i \Big| X_i \right] = L(h^{-1}|x-X_i|)  \xi_\tau(X_i),$$
 since the kernel term $L(\cdot)$ depends only on $X_i$, and thanks to the definition \eqref{eq:cond_extremile1}.
After this first step, the randomness from $\mathbf{Y}$ has been averaged out. The resulting expression we analyze for the asymptotic bias is therefore obtained by applying the "outer" matrix operator, which depends only on $\boldsymbol{X}$, to this new, simplified vector
$$\boldsymbol{e}^{\top}({\boldsymbol{\Phi}}^{\top} \mathbf{W}_\tau \boldsymbol{\Phi})^{-1}{\boldsymbol{\Phi}}^{\top} \mathbf{W} (\xi_\tau(X_1),\dots,\xi_\tau(X_n))^\top.$$
This handling of the conditional expectation, following the methodology of \cite{ferraty22}, allows us to proceed with the second-order Taylor expansion of $\xi_\tau(\text{X})$ around $x$.
Denote by $\mathcal{P}^\perp_K$ the projection onto the orthogonal component of $\mathcal{F}^K$, i.e. $\mathcal{P}^\perp_K f = \sum_{k>K} b_k \phi_k$ for any $f \in L^2(\mathcal{S})$, where we use the same notation as in \eqref{eq:beta_approx}, with $b_k = \langle f, \phi_k \rangle$. Relying on (A1), for any function $\text{X}(\cdot) \in E$ and fixed $\tau\in (0,1)$, the second order Taylor expansion of $\xi_\tau(\text{X})$ around $x$ is given a.s. by
\begin{align}
\label{eq:taylor}
    \xi_\tau(\text{X}) & = \xi_\tau(x) + \langle \xi^{'}_{\tau,x}  , \text{X}-x \rangle + \frac{1}{2}\langle  \xi^{''}_{\tau,\rho}(\text{X} -x),  \text{X} -x\rangle \\
    & = \xi_\tau(x) + \langle \ProjK\xi^{'}_{\tau,x}, \text{X}-x\rangle  + \langle \ProjK^\perp \xi^{'}_{\tau,x}, \text{X}-x\rangle  + \frac{1}{2}\langle  \xi^{''}_{\tau,\rho}(\text{X} -x),  \text{X} -x\rangle 
\end{align}
with $\rho = x + t(\text{X}-x)$, $t\in (0,1)$. Moreover, $\xi^{'}_{\tau,x}$ is the first functional derivative of the extremile regression operator with respect to $x$, $\xi^{''}_{\tau,\rho}$ is the second functional derivative.
Then, plugging in the sample of curves $X_1(\cdot),\dots,X_n(\cdot)$ in expansion \eqref{eq:taylor}, we get
\begin{equation}
\label{eq:eq:taylor_matx}
    \begin{pmatrix}  \xi_\tau(X_1)  \\ \vdots \\ \xi_\tau(X_n) \end{pmatrix} =  \boldsymbol{\Phi} \begin{pmatrix} \xi_\tau(x) \\ \nabla \xi_\tau(x) \end{pmatrix} + \begin{pmatrix}  \langle \mathcal{P}^\perp_K \xi^{'}_{\tau,x}, X_1 -x \rangle \\ \vdots \\ \langle \mathcal{P}^\perp_K \xi^{'}_{\tau,x}, X_n -x \rangle \end{pmatrix} + \frac{1}{2} R_{\tau,\rho, x} 
\end{equation}
where $ \nabla \xi_\tau(x) = (\langle \xi^{'}_{\tau,x}, \phi_1 \rangle, \dots, \langle \xi^{'}_{\tau,x}, \phi_K\rangle)^{\top}$, since $ \langle \ProjK \xi^{'}_{\tau,x}, X_i-x\rangle  = \sum_{k=1}^K \langle \xi^{'}_{\tau,x}, \phi_k\rangle \langle \phi_k ,X_i -x\rangle$. Moreover, $R_{\tau,\rho, x} = (R_{\tau,\rho, x, 1}, \dots, R_{\tau,\rho, x, n})^{\top} $, with $ R_{\tau,\rho, x, i} = \langle \xi^{''}_{\tau,\rho}(X_i -x), X_i -x \rangle$.   Hence, we obtain
\begin{equation}
\label{eq:bias_extr}
    \mathbb{E}(\hat{\alpha}_\tau|\boldsymbol{X})= \xi_\tau(x) + T_1 + \frac{1}{2}T_2
\end{equation}
with 
\begin{equation}
\label{eq:T1}
    T_1 =  \boldsymbol{e}^{\top}({\boldsymbol{\Phi}}^{\top} \mathbf{W}_\tau \boldsymbol{\Phi})^{-1}{\boldsymbol{\Phi}}^{\top} \mathbf{W} ( \langle \mathcal{P}^\perp_K \xi^{'}_{\tau,x}, X_1 -x \rangle \dots \langle \mathcal{P}^\perp_K \xi^{'}_{\tau,x}, X_n -x \rangle)^{\top}
\end{equation}
and
\begin{equation}
\label{eq:T2}
T_2 = \boldsymbol{e}^{\top}({\boldsymbol{\Phi}}^{\top} \mathbf{W}_\tau \boldsymbol{\Phi})^{-1}{\boldsymbol{\Phi}}^{\top} \mathbf{W} R_{\tau,\rho,x}.
\end{equation}
Note that the leading term $\xi_\tau(x)$ in \eqref{eq:bias_extr} is recovered because, while the matrix expression $(\boldsymbol{\Phi}^{T}\mathbf{W}_{\tau}\boldsymbol{\Phi})^{-1}\boldsymbol{\Phi}^{T}\mathbf{W}\boldsymbol{\Phi}$ does not simplify to the identity algebraically, its asymptotic behavior allows for this simplification. This is justified by the convergence results for the estimator's matrix components established in Lemma~\ref{lemma:elements_var} in the appendix. This technique, which shows that the random matrix components converge to their deterministic counterparts, is central to the proofs in \cite{ferraty22}.

\begin{theorem}[Asymptotic bias of $\hat{\alpha}_\tau$]
\label{th:asymt_bias}
Under conditions (A1) -- (A10)
\begin{equation}
     \mathbb{E}(\hat{\alpha}_\tau|\boldsymbol{X})= \xi_\tau(x) + \bigo_{\Pr}(h ||\ProjK^\perp \xi_\tau^{'}||) + \frac{1}{2}\bigo_{\Pr}(h^2).
\end{equation}
\end{theorem}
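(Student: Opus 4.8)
The plan is to take the decomposition \eqref{eq:bias_extr} as the backbone: once the Taylor matrix identity \eqref{eq:eq:taylor_matx} has been substituted into the post-inner-expectation vector $\boldsymbol{e}^{\top}({\boldsymbol{\Phi}}^{\top} \mathbf{W}_\tau \boldsymbol{\Phi})^{-1}{\boldsymbol{\Phi}}^{\top} \mathbf{W} (\xi_\tau(X_1),\dots,\xi_\tau(X_n))^{\top}$, the theorem reduces to three tasks: recovering the leading term $\xi_\tau(x)$ from the block $\boldsymbol{\Phi}(\xi_\tau(x),\nabla\xi_\tau(x))^{\top}$, and bounding the two remainders $T_1 = \bigo_\Pr(h\,\|\ProjK^\perp \xi^{'}_{\tau}\|)$ and $T_2 = \bigo_\Pr(h^2)$ from \eqref{eq:T1}--\eqref{eq:T2}. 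The unifying device throughout is to renormalise every design factor by the small-ball probability $\varphi_x(h) = \Pr(\|X - x\| \le h)$ and to exploit that these normalisations cancel in the product $({\boldsymbol{\Phi}}^{\top} \mathbf{W}_\tau \boldsymbol{\Phi})^{-1}{\boldsymbol{\Phi}}^{\top}\mathbf{W}$.

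For the leading term I would make precise the remark following \eqref{eq:T2}. The crucial structural fact is that $\int_0^1 J_\tau(t)\,dt = K_\tau(1) - K_\tau(0) = 1$, so that $\mathbb{E}[J_\tau(F(Y|x)) \mid X = x] = \int_0^1 J_\tau(u)\,du = 1$; the extremile reweighting is therefore mean-preserving at the target point. Consequently, after normalising by $n\varphi_x(h)$, the diagonal factor $\mathbf{J}_\tau$ averages to $1$ on the shrinking support $\{\|X_i - x\| = \bigo(h)\}$, so $(n\varphi_x(h))^{-1}{\boldsymbol{\Phi}}^{\top} \mathbf{W}_\tau \boldsymbol{\Phi}$ and $(n\varphi_x(h))^{-1}{\boldsymbol{\Phi}}^{\top} \mathbf{W} \boldsymbol{\Phi}$ share the same deterministic leading limit, which Lemma~\ref{lemma:elements_var} identifies. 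Writing $M = ({\boldsymbol{\Phi}}^{\top} \mathbf{W}_\tau \boldsymbol{\Phi})^{-1}{\boldsymbol{\Phi}}^{\top} \mathbf{W} \boldsymbol{\Phi}$, this gives $M \to I$ in probability and hence $\boldsymbol{e}^{\top} M\,(\xi_\tau(x), \nabla \xi_\tau(x))^{\top} \to \xi_\tau(x)$, recovering the leading term without introducing spurious bias.

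The two remainders are then estimated by the same mechanism. For $T_1$, Cauchy--Schwarz gives $|\langle \ProjK^\perp \xi^{'}_{\tau,x}, X_i - x\rangle| \le \|\ProjK^\perp \xi^{'}_{\tau,x}\|\,\|X_i - x\|$, and the diagonal weight $L(h^{-1}\|x - X_i\|)$ confines the effective support to $\|X_i - x\| = \bigo(h)$ under the kernel and small-ball assumptions (A2)--(A5); hence the vector to which the averaging operator is applied has sup-norm $\bigo(h\,\|\ProjK^\perp \xi^{'}_{\tau,x}\|)$ there. Factoring $(n\varphi_x(h))^{-1}$ out of both $({\boldsymbol{\Phi}}^{\top} \mathbf{W}_\tau \boldsymbol{\Phi})^{-1}$ and ${\boldsymbol{\Phi}}^{\top}\mathbf{W}$, the inverse block is $\bigo_\Pr(1)$ by the convergence above while the weighted-sum block is $\bigo_\Pr(h\,\|\ProjK^\perp \xi^{'}_{\tau,x}\|)$, giving $T_1 = \bigo_\Pr(h\,\|\ProjK^\perp \xi^{'}_{\tau}\|)$. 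The term $T_2$ is handled identically once (A1) is used to bound the second functional derivative operator, so that $|R_{\tau,\rho,x,i}| = |\langle \xi^{''}_{\tau,\rho}(X_i - x), X_i - x\rangle| \le C\,\|X_i - x\|^2 = \bigo(h^2)$ on the support, whence $T_2 = \bigo_\Pr(h^2)$. Collecting the three contributions in \eqref{eq:bias_extr} completes the proof.

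I expect the main obstacle to be the functional-data control of the random operator $\boldsymbol{e}^{\top}({\boldsymbol{\Phi}}^{\top} \mathbf{W}_\tau \boldsymbol{\Phi})^{-1}{\boldsymbol{\Phi}}^{\top}\mathbf{W}$: establishing that $(n\varphi_x(h))^{-1}{\boldsymbol{\Phi}}^{\top} \mathbf{W}_\tau \boldsymbol{\Phi}$ converges to a nondegenerate, invertible deterministic limit is exactly where infinite-dimensionality bites, since there is no Lebesgue density and the analysis must proceed through the small-ball probability $\varphi_x(h)$ and the fractional-moment conditions of (A2)--(A5). Relative to the mean-regression case of \cite{ferraty22}, the extra difficulty is the random weight $\mathbf{J}_\tau$; I would need to verify carefully that its boundedness (it takes values in a compact subset of $(0,\infty)$ as $F \in [0,1]$ and $J_\tau$ is continuous there) and the measurability of $J_\tau(F(\cdot|x))$ let the moment-convergence and invertibility arguments of Lemma~\ref{lemma:elements_var} go through unchanged, and that the plug-in of $\hat F_{h_F}$ for $F$ is asymptotically negligible, as asserted after \eqref{eq:bs_WLS}.
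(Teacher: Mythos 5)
Your overall skeleton matches the paper's: the decomposition \eqref{eq:bias_extr} into $\xi_\tau(x)+T_1+\tfrac{1}{2}T_2$, the observation that $\int_0^1 J_\tau(t)\,\mathrm{d}t=1$ makes the reweighting asymptotically mean-preserving so that the leading term survives, and the identification of the uniform convergence of $J_\tau(\hat F_{h_F}(\cdot|x))$ to $J_\tau(F(\cdot|x))$ (Lemma~\ref{lemma:S8}) as the genuinely new ingredient relative to \cite{ferraty22} are all on target.

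The gap is in how you bound $T_1$ and $T_2$. You bound the vectors componentwise on the kernel support ($\bigo(h\|\ProjK^\perp\xi'_{\tau,x}\|)$ and $\bigo(h^2)$, both correct) and then assert that the operator $\boldsymbol{e}^{\top}({\boldsymbol{\Phi}}^{\top}\mathbf{W}_\tau\boldsymbol{\Phi})^{-1}{\boldsymbol{\Phi}}^{\top}\mathbf{W}$, after normalising by $n\pi_x(h)$, is $\bigo_{\Pr}(1)$ because the normalised Gram matrix has a nondegenerate invertible limit. It does not: by Lemma~\ref{lemma:others} the lower-right block $\Delta_\tau$ is asymptotically proportional to $h^2\Gamma$, so the normalised Gram matrix degenerates as $h\to0$ and $K=K_n\to\infty$, and its inverse has entries of order $h^{-2}\lambda_K^{-1}$. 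A sup-norm-times-operator-norm argument therefore yields a rate polluted by $h^{-1}\lambda_K^{-1}$ rather than the one claimed in the theorem. The paper's proof instead inverts the block matrix explicitly via the Schur complement and exploits that the components of the relevant vectors carry \emph{different} powers of $h$ --- $A_{\tau,0}=\bigo_{\Pr}(h\cdot)$ but $A_{\tau,j}=\bigo_{\Pr}(h^2\cdot)$ for $j\ge1$, and likewise $B_{\tau,0}=\bigo_{\Pr}(h^2)$ versus $B_{\tau,j}=\bigo_{\Pr}(h^3)$ --- so that the $h^{-2}$ from $\Delta_\tau^{-1}$ and the $\lambda_K^{-1}$ from the bound $|\mathbf u^{\top}\Gamma^{-1}\mathbf v|\le\lambda_K^{-1}\|\mathbf u\|_2\|\mathbf v\|_2$ are exactly compensated, using the summability bounds of Lemma~\ref{lemma:S9} and Corollary~\ref{cor:upper_bound_alpha}. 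Your uniform componentwise bound erases precisely this structure. Relatedly, for $T_2$ the paper must split $R_{\tau,\rho,x,i}$ into $R_{\tau,x,x,i}$ plus a difference controlled by the Lipschitz property of $\xi''$ in (A1), and the resulting $C$-terms are negligible only under the hypothesis $h\sqrt K=\smallo(1)$ of (A4); your direct bound $|R_{\tau,\rho,x,i}|\le C\|X_i-x\|^2$ skips the step where that hypothesis is actually consumed. A minor further point: $J_\tau$ does not take values in a compact subset of $(0,\infty)$ (it vanishes at an endpoint of $[0,1]$ for $\tau\neq 1/2$); only its boundedness above and $\mathbb{E}[J_\tau(F(Y|x))\mid X=x]=1$ are needed.
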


\subsection{Asymptotic variance}
\label{sec:var}
We now turn to the variance. A similar reasoning to the one used for the asymptotic bias in Section~\ref{sec:bias} applies here. The variance of the estimator, conditional on $\boldsymbol{X}$, is given by:
$$\textup{Var}(\hat{\alpha}_\tau|\boldsymbol{X}) = \textup{Var}\left( \boldsymbol{e}^{\top}({\boldsymbol{\Phi}}^{\top} \mathbf{W}_\tau \boldsymbol{\Phi})^{-1}{\boldsymbol{\Phi}}^{\top} \mathbf{W}_\tau \mathbf{Y} \Big| \boldsymbol{X} \right).$$
The leading asymptotic behavior of this expression is determined by the variance of the weighted response vector, $\mathbf{W}_\tau \mathbf{Y}$. Since the responses $Y_i$ are conditionally independent given $\boldsymbol{X}$, the covariance matrix $\textup{Var}(\mathbf{W}_\tau \mathbf{Y} | \boldsymbol{X})$ is diagonal. Its $i$-th diagonal element is given by $(L(h^{-1}\|x-X_i\|))^2 \textup{Var}(J_\tau(F(Y_i|x))Y_i | X_i)$, which simplifies to $(L(h^{-1}\|x-X_i\|))^2 \sigma^2_\tau(X_i)$ by definition. By using assumption (A5), we then arrive at the following expression:
\begin{align}
\label{eq:var1}
    \textup{Var}(\hat{\alpha}_\tau|\boldsymbol{X}) = [\sigma^2_\tau(x) + \smallo(1)]\boldsymbol{e}^{\top}({\boldsymbol{\Phi}}^{\top} \mathbf{W}_\tau \boldsymbol{\Phi})^{-1}{\boldsymbol{\Phi}}^{\top} \mathbf{W}^2  \boldsymbol{\Phi} ({\boldsymbol{\Phi}}^{\top} \mathbf{W}_\tau \boldsymbol{\Phi})^{-1}\boldsymbol{e},
\end{align}
where $\mathbf{W}^2 = \mbox{diag} \left( (L(h^{-1}||x - X_i|| ))^2\right)_{i=1,..,n}$.

\begin{theorem}[Asymptotic variance of $\hat{\alpha}_\tau$]
\label{th:asymt_variance}
Under conditions (A1) -- (A6)
$$ \textup{Var}(\hat{\alpha}_\tau|\boldsymbol{X}) = \bigo_{\Pr}( \lambda_K\{n \pi_x(h) \}^{-1}). $$
\end{theorem}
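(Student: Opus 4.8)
The plan is to read off the stochastic order of the scalar quadratic form appearing in \eqref{eq:var1}. Since $\sigma^2_\tau(x) + \smallo(1) = \bigo_{\Pr}(1)$ by the boundedness and continuity of the conditional variance (A5)--(A6), it suffices to determine the order of
\[
Q_n = \boldsymbol{e}^{\top}(\boldsymbol{\Phi}^{\top}\mathbf{W}_\tau\boldsymbol{\Phi})^{-1}\boldsymbol{\Phi}^{\top}\mathbf{W}^2\boldsymbol{\Phi}(\boldsymbol{\Phi}^{\top}\mathbf{W}_\tau\boldsymbol{\Phi})^{-1}\boldsymbol{e}.
\]
Following the strategy of \cite{ferraty22}, I would treat $\mathbf{A} := \boldsymbol{\Phi}^{\top}\mathbf{W}_\tau\boldsymbol{\Phi}$ and $\mathbf{B} := \boldsymbol{\Phi}^{\top}\mathbf{W}^2\boldsymbol{\Phi}$ as random $(K+1)\times(K+1)$ matrices and, after a suitable normalization, replace them by the deterministic probabilistic limits supplied by Lemma~\ref{lemma:elements_var}.

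The normalization has two ingredients. The first is the effective local sample size: because the kernel weight $L(h^{-1}\|x-X_i\|)$ concentrates on the ball $\{\|x-X_i\|\le h\}$, the expected number of contributing curves is of order $n\pi_x(h)$, and every entry of $\mathbf{A}$ and $\mathbf{B}$ is, up to constants, of this order; hence I would divide by $n\pi_x(h)$. The second ingredient accounts for the columns of $\boldsymbol{\Phi}$ living on different scales: the intercept column is $\bigo(1)$, while the $k$-th slope column, with entries $\langle \phi_k, x-X_i\rangle$, carries the eigenvalue scale $\lambda_k$ of the $k$-th principal direction. Writing $\mathbf{D}$ for the diagonal matrix encoding these scales and factoring $\mathbf{A} = \mathbf{D}\tilde{\mathbf{A}}\mathbf{D}$ and $\mathbf{B} = \mathbf{D}\tilde{\mathbf{B}}\mathbf{D}$, Lemma~\ref{lemma:elements_var} provides that $\{n\pi_x(h)\}^{-1}\tilde{\mathbf{A}}$ and $\{n\pi_x(h)\}^{-1}\tilde{\mathbf{B}}$ converge in probability to fixed, invertible limiting matrices.

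Because $\boldsymbol{e}$ selects the intercept coordinate, I would isolate the $(0,0)$ entry of the sandwich $\mathbf{A}^{-1}\mathbf{B}\mathbf{A}^{-1}$, partitioning $\mathbf{A}$ into its intercept block $A_{00}$ and its $K\times K$ slope block $A_{\bullet\bullet}$ and using the Schur complement identity $[\mathbf{A}^{-1}]_{00} = (A_{00} - A_{0\bullet}A_{\bullet\bullet}^{-1}A_{\bullet0})^{-1}$. Substituting the limits from Lemma~\ref{lemma:elements_var} collapses $Q_n$ to $\{n\pi_x(h)\}^{-1}$ multiplied by the eigenvalue factor that survives the $\mathbf{D}$-rescaling of the slope block; this surviving scale is exactly $\lambda_K$, and multiplying by $\sigma^2_\tau(x)+\smallo(1)=\bigo_{\Pr}(1)$ then yields the claimed order $\bigo_{\Pr}(\lambda_K\{n\pi_x(h)\}^{-1})$.

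I expect the main obstacle to be the control of the inverse of the random matrix $\mathbf{A} = \boldsymbol{\Phi}^{\top}\mathbf{W}_\tau\boldsymbol{\Phi}$: one must show that, after rescaling by $\mathbf{D}$ and $n\pi_x(h)$, it is invertible with probability tending to one and that its smallest eigenvalue stays bounded away from zero. This is delicate precisely because the retained eigenvalues $\lambda_1\ge\cdots\ge\lambda_K$ can be nearly degenerate and the small-ball concentration distorts the conditional second moments of the projections $\langle \phi_k, x-X_i\rangle$; this is the analytic content packaged in Lemma~\ref{lemma:elements_var}. A secondary point is that $\mathbf{W}_\tau = \mathbf{W}\mathbf{J}_\tau$ carries the extra weights $J_\tau(F(Y_i|x))$, but since $J_\tau$ is bounded on $[0,1]$ these affect only constants and not the rate, so the same limiting argument goes through with $\mathbf{W}_\tau$ in place of $\mathbf{W}$.
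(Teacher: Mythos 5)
Your overall strategy---write the sandwich $Q_n$ in block form, extract the $(0,0)$ entry via the Schur complement, and substitute the probabilistic limits of the normalized matrix entries---is the same as the paper's, which rewrites \eqref{eq:var1} as
\begin{equation*}
[\sigma^2_\tau(x)+\smallo(1)]\,(\delta_{\tau,0}-\delta_\tau^{\top}\Delta_\tau^{-1}\delta_\tau)^{-2}\left(\delta_0-\delta_\tau^{\top}\Delta_\tau^{-1}\delta-\delta^{\top}\Delta_\tau^{-1}\delta_\tau+\delta_\tau^{\top}\Delta_\tau^{-1}\Delta\Delta_\tau^{-1}\delta_\tau\right)
\end{equation*}
and bounds each piece using Lemmas \ref{lemma:others} and \ref{lemma:elements_var}. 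However, your proposal has a genuine gap in the one step that actually produces the rate. First, you misidentify $\lambda_K$ and hence the rescaling $\mathbf{D}$: in this paper $\lambda_K$ is the smallest eigenvalue of the matrix $\Gamma$ with entries $[\Gamma]_{j_1,j_2}=\tilde\gamma^{1,1}_{j_1,j_2}(0)$, i.e.\ of the limiting small-ball-conditional second-moment matrix of the projections, not the $K$-th FPCA eigenvalue of the covariance operator, and there is no per-column ``eigenvalue scale.'' The correct column normalization is by powers of $h$ (each $\langle\phi_k,x-X_i\rangle$ is $\bigo(h)$ on the kernel support), giving $\delta_{\tau,j}\asymp h$, $[\Delta_\tau]_{j_1,j_2}\asymp h^2$ and $\Delta_\tau^{-1}=\tfrac{C_\tau(0,1|x)}{C_\tau(2,1|x)}h^{-2}\Gamma^{-1}(1+\smallo_{\Pr}(1))$; a factorization $\mathbf{A}=\mathbf{D}\tilde{\mathbf{A}}\mathbf{D}$ with $\mathbf{D}$ built from FPCA eigenvalues is not what Lemma \ref{lemma:elements_var} delivers.

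Second, the assertion that ``the surviving scale is exactly $\lambda_K$'' is precisely what needs to be proved, and it is not a single surviving scale but a cancellation between two competing factors that you never track. The Schur complement satisfies $(\delta_{\tau,0}-\delta_\tau^{\top}\Delta_\tau^{-1}\delta_\tau)^{-1}=\bigo_{\Pr}(\lambda_K)$ and enters \emph{squared}, contributing $\bigo_{\Pr}(\lambda_K^2)$. The inner term contributes $\bigo_{\Pr}(\{n\pi_x(h)\}^{-1})+\bigo_{\Pr}(\lambda_K^{-1}\{n\pi_x(h)\}^{-1})$, where the single factor $\lambda_K^{-1}$ comes from bounding quadratic forms in $\Gamma^{-1}$ via $|\mathbf{u}^{\top}\Gamma^{-1}\mathbf{v}|\le\lambda_K^{-1}\|\mathbf{u}\|_2\|\mathbf{v}\|_2$ together with $\|\delta_\tau\|_2=\bigo_{\Pr}(h)$ (which relies on the summability bounds $\sum_j(\tilde\gamma_j^1(0))^2\le 1$ of Lemma \ref{lemma:S9}), and from the fact that $\Delta\propto h^2\{n\pi_x(h)\}^{-1}\Gamma$, so that $\Gamma^{-1}\Delta\Gamma^{-1}$ costs only one power of $\lambda_K^{-1}$ rather than two. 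Only the product $\lambda_K^2\cdot\lambda_K^{-1}\{n\pi_x(h)\}^{-1}$ yields the claimed $\bigo_{\Pr}(\lambda_K\{n\pi_x(h)\}^{-1})$; without this eigenvalue bookkeeping the stated rate does not follow from your argument.
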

where $\lambda_K$ is the smallest eigenvalue of the $\Gamma$ matrix, defined in 
the appendix and  $\pi_x(h)=\Pr(||\text{X}-x||\le h)$.

\subsection{Extremile estimator based on local bandwidth}
\label{sec:local_band}

In the previous sections we have considered a global bandwidth $h$ that goes to zero as the sample size increases as dictated by (A3). We now consider two options for a data-driven local bandwidth selection for finite samples.

The first one we consider is inspired by the bandwidth selected via $k$-nearest neighbours (kNN) algorithm suggested by \cite{ferraty22} for mean scalar-on-function local linear regression. 
More specifically, consider a finite collection of bandwidths $H$, and call $B_h(x)$ the open ball of radius $h$ centered in $x$. For every $i=1,\dots,n$, the local bandwidth is
\begin{equation}
\label{eq:loc_band_kNN}
h_{i}^k = \inf \bigg \{h \in H: \sum_{j=1,\, j\neq i}^n \boldsymbol{1}_{X_j \in B_h(x)} = k \bigg \},
\end{equation}
namely the minimal radius of the ball centered in $x$ that contains exactly $k$ functional independent variables not including $X_i(\cdot)$. Then the extremile estimator based on the kNN local bandwidth is
\begin{equation}
\label{eq:alpha_loc_WLS} 
   \hat{\alpha}^\ell_{\tau} = \boldsymbol{e}^{\top}({\boldsymbol{\Phi}}^{\top} \mathbf{W}^\ell_{\tau} \boldsymbol{\Phi})^{-1}{\boldsymbol{\Phi}}^{\top} \mathbf{W}^\ell_{\tau} \mathbf{Y},
\end{equation}
where $\mathbf{Y} = (Y_1,\dots,Y_n)^{\top}$ and $\mathbf{W}^\ell_\tau= \mathbf{W}^\ell \mathbf{J}_\tau(F(\mathbf{Y}|x))$, and where we now have $\mathbf{W}^\ell = \text{diag}{ \left( L((h_{i}^k)^{-1}||x - X_i|| )\right)}_{i=1,..,n}$ and  $\mathbf{J}_\tau(F(\mathbf{Y}|x)) = \text{diag}{(J_\tau(F(Y_i|x)))}_{i=1,..,n}$. The following theorem shows the asymptotic behaviour of estimator \eqref{eq:alpha_loc_WLS}.
\begin{theorem}(Asymptotic behaviour of $\hat{\alpha}^\ell_\tau$)
\label{th:loc_est}
Under (A1)-(A3) and (A5)-(A7), we have
\begin{equation}
\label{eq:asym_loc_est}
     \hat{\alpha}^\ell_{\tau}(x) - \xi_{\tau}(x) = \bigo_{\Pr}\left(||\ProjK^\perp \xi'_{\tau,x} || \pi_x^{-1}\left( \frac{k}{n} \right) \right) + \bigo_{\Pr}\left(\left( \pi_x^{-1}\left( \frac{k}{n} \right) \right)^2 \right) + \bigo_{\Pr}(k^{-1/2}).
\end{equation}
\end{theorem}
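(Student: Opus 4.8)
The plan is to reduce the locally-adaptive estimator to the fixed-bandwidth analyses of Theorems~\ref{th:asymt_bias} and \ref{th:asymt_variance}, by identifying the single deterministic radius to which all the kNN bandwidths concentrate. The key observation is that, conditionally on the covariate sample $\boldsymbol{X}$, every $h_i^k$ in \eqref{eq:loc_band_kNN} is nonrandom: it is the $k$-th nearest-neighbour distance to $x$ among $\{X_j:j\neq i\}$, and hence equals either the $k$-th or the $(k+1)$-th order statistic of $\{\|X_j-x\|\}_{j=1}^{n}$, according to whether $X_i$ itself lies among the $k$ points closest to $x$. The $n$ leave-one-out radii thus take at most two distinct values. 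I would first make this precise and then split
$$\hat{\alpha}^\ell_{\tau}(x) - \xi_\tau(x) = \big[\hat{\alpha}^\ell_{\tau}(x) - \mathbb{E}(\hat{\alpha}^\ell_{\tau}\mid\boldsymbol{X})\big] + \big[\mathbb{E}(\hat{\alpha}^\ell_{\tau}\mid\boldsymbol{X}) - \xi_\tau(x)\big],$$
treating the bias bracket and the stochastic bracket by the machinery of Sections~\ref{sec:bias} and \ref{sec:var}, respectively.

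The first step is to locate the deterministic target $h_0:=\pi_x^{-1}(k/n)$. Passing to the probability-integral transforms $U_j:=\pi_x(\|X_j-x\|)$, which are uniform under the continuity of $\pi_x$, the relevant order statistics satisfy $U_{(k)},U_{(k+1)}=(k/n)(1+\smallo_{\Pr}(1))$. The defining property that $B_{h_i^k}(x)$ contains exactly $k$ sample points translates at once, through $nU_{(k)}=k(1+\smallo_{\Pr}(1))$, into the population count $n\pi_x(h_i^k)=k(1+\smallo_{\Pr}(1))$; obtaining in addition $h_i^k=h_0(1+\smallo_{\Pr}(1))$ uniformly in $i$ requires inverting $\pi_x$, which is where the regularity of the small-ball function $\pi_x$ near the origin enters. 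This concentration step is the part I expect to be the main obstacle, since the inversion must be controlled uniformly over all $n$ radii and the order-statistic fluctuations shown negligible against the leading bias and variance rates.

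With $h_i^k\asymp h_0$ in hand, the bias bracket is handled by the computation underlying Theorem~\ref{th:asymt_bias}, namely the Taylor decomposition \eqref{eq:bias_extr}--\eqref{eq:T2}, now with the scalar bandwidth in $\mathbf{W}$ replaced by the observation-specific $h_i^k$ (the global bandwidth rate (A4) being supplanted by the kNN radius $h_0$). Because all radii agree with $h_0$ to leading order, the convergence of the normalised matrix components in Lemma~\ref{lemma:elements_var} carries over with $h\mapsto h_0$, the leading term $\xi_\tau(x)$ is recovered exactly as in \eqref{eq:bias_extr}, and $T_1,T_2$ inherit the orders $\bigo_{\Pr}(h_0\,\|\ProjK^\perp\xi'_{\tau,x}\|)$ and $\bigo_{\Pr}(h_0^2)$. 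Substituting $h_0=\pi_x^{-1}(k/n)$ produces the first two terms of \eqref{eq:asym_loc_est}.

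For the stochastic bracket I would reuse the variance computation behind Theorem~\ref{th:asymt_variance} under the same substitution, giving a conditional variance of order $\lambda_K\{n\pi_x(h_0)\}^{-1}$. Since the kNN construction forces $n\pi_x(h_0)=k(1+\smallo_{\Pr}(1))$, and $\lambda_K$ is a fixed positive constant for fixed $K$ under (A1), absorbed into the order symbol, the conditional variance is $\bigo_{\Pr}(k^{-1})$. A conditional Chebyshev inequality then yields $\hat{\alpha}^\ell_{\tau}(x)-\mathbb{E}(\hat{\alpha}^\ell_{\tau}\mid\boldsymbol{X})=\bigo_{\Pr}(k^{-1/2})$, the third term of \eqref{eq:asym_loc_est}. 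Combining the two brackets by the triangle inequality completes the argument.
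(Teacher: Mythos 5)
Your proposal is correct and follows essentially the same route as the paper: the paper likewise reduces to the global-bandwidth analysis by establishing $\pi_x(h^k)\sim k/n$ and $h^k\sim\pi_x^{-1}(k/n)$ almost surely (Lemma~\ref{lemma:kNN_pi}, imported from the supplementary material of the local linear scalar-on-function regression paper) and then rerunning the bias and variance computations of Theorems~\ref{th:asymt_bias} and~\ref{th:asymt_variance} with $h\mapsto\pi_x^{-1}(k/n)$ and $n\pi_x(h)\mapsto k$ (Lemmas~\ref{lemma:S8_kNN}--\ref{lemma:elements_knn}). The only cosmetic differences are that the paper neutralises the randomness of the leave-one-out radii $h_i^k$ by conditioning on $\mathcal{S}_{-i}$ in Lemma~\ref{lemma:S8_kNN} rather than by your uniform two-value concentration argument, and that the $\lambda_K$ factor is absorbed into $\bigo_{\Pr}(k^{-1/2})$ because $\lambda_K$ is bounded above (a consequence of Lemma~\ref{lemma:S9}), not because $K$ is fixed --- it grows with $n$ under (A7).
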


The second option takes bandwidth \eqref{eq:loc_band_kNN} as a building block for an extremile level-dependent bandwidth. Using the same idea as in the supplementary materials of \cite{Daouia21}, for a fixed $\tau\in(0,1)$, the adaptive local bandwidth is given by
\begin{align}
    \label{eq:loc_band_tau}
    h_{\tau,i}^k &=  h_{i}^k\left[  V_{K_{\tau}\circ \Phi} \cdot(J_\tau(\tau) \cdot \phi(\Phi^{-1}(\tau)))^2\frac{4\tau(1-\tau)}{(\phi(\Phi^{-1}(\tau)))^2}\right]^{1/5} \\
    & = h_{i}^k\left[ 4\tau(1-\tau) V_{K_{\tau}\circ \Phi} (J_\tau(\tau))^2\right]^{1/5}\\
    & = h_{i}^k\left[ 4\tau(1-\tau) \left(\int_0^1 (\Phi^{-1}(t) - \mu_\tau)^2 J_\tau(t) \text{d}t \right) (J_\tau(\tau))^2\right]^{1/5}
\end{align}
where $\Phi^{-1}(\cdot)$ and $\mu_\tau = \int_0^1 \Phi^{-1}(t) J_\tau(t) \text{d}t$ are the quantile function and the true extremile of level $\tau$ of a standard Gaussian distribution, respectively. Notice that we used notation $V_{K_{\tau}\circ \Phi} = \left(\int_0^1 (\Phi^{-1}(t) - \mu_\tau)^2 J_\tau(t) \text{d}t \right)$.

\subsection{Bandwidth selection for conditional CDF estimator}
\label{sec:bndwdt_CDF}

In order to pick a data-driven bandwidth for estimator \eqref{eq:F_hat_fct} we rely on the work of \cite{chagny2014}, which is based on a trade-off between bias and variance. We briefly overview such selection method in this section.

Consider data $\{Y_i, X_i(\cdot) \}_{i=1}^n$ and bandwidth $h_F \in H_F$, where $H_F$ is a finite set of positive bandwidths with cardinality depending on the sample size $n$. Then, the small ball probability of function $\text{X}\in E$, drawn from the distribution of random variable $X(\cdot)$, of being in  a neighborhood centered in $x(\cdot) \in E $ of radius $h_F$, namely $$\pi_x(h_F) = \Pr(|| \text{X} - x || \leq h_F),$$ is approximated by $$\widehat{\pi}_x(h_F) = \frac{1}{n} \sum_{i=1}^n \boldsymbol{1}_{\{||X_i - x||\leq h_F\}},$$ assuming that the data in the sample are iid realisations from the same distribution of $X(\cdot)$. Moreover, let $\mathcal{I}_Y \subset \mathbb{R}$ the interval in which $Y_1, \dots, Y_n$ lie.
Then, bandwidth $h^{\text{opt}}_F$ for estimator $\hat{F}_{h_F}(\cdot|x)$ is chosen such that it minimizes a trade-off between an approximation of the bias and an upper bound of the variance of such estimator. Concretely, define
\begin{equation}
\label{eq:V_est}
    \widehat{V}(h_F, x) =\begin{cases}
   \kappa \frac{\ln n}{n \ln \widehat{\pi}_x(h_F)} & \text{if $\widehat{\pi}_x(h_F) \neq 0$},\\
+ \infty &\text{otherwise},
\end{cases}
\end{equation}
and 
\begin{equation}
\label{eq:A_est}
    \widehat{A}(h_F, x) = \max_{h_F^{'} \in H_F} \left[
    \left ( \int_{\mathcal{I}_Y} \left (\hat{F}_{h^{'}_{F}}(y|x) - \hat{F}_{h_F \vee h_F^{'}}(y|x) \right)^2\text{d}y\right)
    - \widehat{V}(h_F^{'}, x) \right]_+ .
\end{equation}
Then\begin{equation}
\label{eq:h_F_opt}
   h_F^{\text{opt}}(x) = \argmin_{h_F \in H_F} \left(\widehat{A}(h_F, x) +  \widehat{V}(h_F, x)\right),
\end{equation}
where $\kappa>0$. 
$\widehat{A}(h)$ and $\widehat{V}(h)$ are the approximations of bias term and upper bound of variance term, respectively. 

In order to compute the optimal bandwidth for the $i$-th covariate $X_i(\cdot)$, i.e. $h_F^{\text{opt}}(X_i)$ for $i=1,\dots,n$, we use a leave-one-out procedure, taking $Y_1,\dots,Y_{i-1}, Y_{i+1}, \dots, Y_n$ and $X_1(\cdot),\dots,X_{i-1}(\cdot), X_{i+1}(\cdot), \dots, X_n(\cdot)$ as sample, and centering \eqref{eq:V_est}, \eqref{eq:A_est} and \eqref{eq:h_F_opt} in $X_i(\cdot)$. Notice that, as explained in \cite[Section 6]{chagny2014}, the choice of $\kappa$ controls the bias-variance trade-off, and small values of the parameter lead to smaller values of $\{ h_F^{\text{opt}}(X_i)\}_i$.

\section{Simulations}
\label{sec:simulations}

In this section we test our proposed methodology with a simulation study. In particular, we generate response variables according to a scalar-on-function regression model as follows:
\begin{align}
\label{eq:model_sim}
Y_i = \beta_0 + \int_\mathcal{S} \beta(s) X_i(s) \text{d}s + \sigma(X_i)\epsilon_i,\hspace{5mm} i \in \{1,\dots,n \},  
\end{align}
with $\epsilon_i \stackrel{iid}{\sim} \mathcal{N}(0,\sigma^2_\epsilon)$ independent of $\{X_i(\cdot) \}_{i=1}^n$, and where $\sigma(X_i)$ controls the heteroskedasticity of the model. Across all scenarios we fix the error term's standard deviation $\sigma_\epsilon = 0.25$, 
the intercept $\beta_0 = 0$, and the functional coefficient $\beta(s) = 2 \cos(2 \pi s)$, with $s\in \mathcal{S} = [0,1]$.
Moreover, we pick the heteroskedastic relationship to be
\begin{equation}
\label{eq:het_rel}
    \sigma(X_i) = 1 +  \int_\mathcal{S} \lvert X_i(s) \rvert \text{d}s.
\end{equation}

In order to compute the true conditional extremile of level $\tau$ corresponding to the model, we may use the relationship between extremiles and quantiles. Given a level $\tau\in(0,1)$, the conditional extremile $\xi_\tau(x)$ can be computed from the conditional quantile $q_t(x)$ as $\xi_\tau(x) = \int_0^1 q_t(x) J_\tau(t) \text{d}t$, where $J_\tau(t)$ is the corresponding extremile weight function.
For model \eqref{eq:model_sim}, the conditional $t$-quantile is given by
\begin{align*}
     q_t(X_i) = \beta_0 + \int_\mathcal{S} \beta(s) X_i(s) \text{d}s + \sigma_\epsilon \sigma(X_i)\Phi^{-1}(t),
\end{align*}
where $\Phi(\cdot)$ is the CDF of a standard Gaussian random variable.
Consequently, the conditional extremile of level $\tau$ is
\begin{align}
\label{eq:sim_model}
\nonumber
    \xi_\tau(X_i) & = \int_0^1 q_t(X_i) J_\tau(t) \text{d}t\\
\nonumber
    & = \left(\beta_0 + \int_\mathcal{S} \beta(s) X_i(s) \text{d}s \right) \int_0^1  J_\tau(t) \text{d}t + \sigma_\epsilon \sigma(X_i)  \int_0^1 \Phi^{-1}(t) J_\tau(t) \text{d}t\\
    & = \beta_0 + \int_\mathcal{S} \beta(s) X_i(s) \text{d}s +\sigma_\epsilon \sigma(X_i) \xi_{\epsilon^*,\tau},
\end{align}
where $\xi_{\epsilon^*,\tau}$ is the extremile of order $\tau$ of a standard Gaussian distribution, and the last equality holds since $\int_0^1 J_\tau(t) \text{d}t = 1$. 

The functional covariates $X_i(s)$ are generated using two different schemes to assess the method's performance on signals with different characteristics. In \textit{Scenario A}, the covariates are constructed from a B-spline basis. Specifically, $X_i(s) = \sum_{k=1}^{5} c_{ik} f_k(s)$, where $\{f_k(s)\}_{k=1}^5$ are the five central basis functions from a 7-degree-of-freedom B-spline basis on $\mathcal{S}$.  The coefficients are drawn as $c_{i1}, c_{i2} \sim N(0, 0.5^2)$, $c_{i3} \sim N(0, 0.25^2)$, and $c_{i4}, c_{i5} \sim N(0, 0.05^2)$. In \textit{Scenario B}, covariates are generated from a basis of trigonometric and constant functions: $X_i(s) = \sum_{j=1}^{5} c_{ij} g_j(s)$. The basis functions are $g_1(s) = 1$, $g_2(s) = \sin(\pi s)$, $g_3(s) = \cos(10\pi s)$, $g_4(s) = \sin(30\pi s)$, and $g_5(s) = \cos(40\pi s)$.  The corresponding coefficients are drawn as $c_{i1} \sim N(0, 0.25^2)$, $c_{i2} \sim N(2, 1^2)$, $c_{i3} \sim N(0, 0.5^2)$, and $c_{i4}, c_{i5} \sim N(0, 0.05^2)$. For each simulated data, the basis functions used for the finite-dimensional projection are the eigenfunctions of the functional sample, obtained via FPCA. In particular, their number $K$ is determined such that they explain up to 95\% of the total variability of each sample.

We evaluate the performance of our proposed estimator over $B=300$ Monte Carlo repetitions. For each repetition $b \in \{1, \dots, B\}$ and for each extremile level $\tau$ in a predefined grid, we assess the estimation accuracy using the Mean Squared Error (MSE). The MSE for a single simulation run $b$ is calculated as:
\begin{align*}
\text{MSE}_\tau^{(b)} = \frac{1}{n} \sum_{i=1}^n \left(\xi_\tau(X_i) - \hat{\xi}_{\tau}^{(b)}(X_i)\right)^2,
\end{align*}
where $\xi_\tau(X_i)$ is the true conditional extremile for the $i$-th observation as defined in \eqref{eq:sim_model}, and $\hat{\xi}_{\tau}^{(b)}(X_i)$ is its corresponding estimate from our method in the $b$-th run. We then report the averaged MSE (AMSE) and its standard deviation (SD) across all $B$ repetitions, computed as:
\begin{align*}
\text{AMSE}_\tau = \frac{1}{B} \sum_{b=1}^B \text{MSE}_\tau^{(b)} \quad \text{and} \quad \text{SD(MSE}_\tau) = \sqrt{\frac{1}{B-1} \sum_{b=1}^B \left(\text{MSE}_\tau^{(b)} - \text{AMSE}_\tau\right)^2}.
\end{align*}
Furthermore, we investigate the non-crossing property of our estimator. We compute the empirical crossing rate, which is the proportion of the $B$ simulation runs in which the estimated conditional extremile curves cross for at least one pair of levels. A crossing event is recorded for repetition $b$ if, for any two levels $\tau_j < \tau_k$, there exists at least one observation $i$ such that $\hat{\xi}_{\tau_j}^{(b)}(X_i) > \hat{\xi}_{\tau_k}^{(b)}(X_i)$. We compare this crossing rate with that of estimated conditional quantiles at the same level. The procedure for estimating the latter is detailed in the Appendix. 

We establish a benchmark setting with a sample size of $n=200$ observations, where each function is discretized on a grid of $S=100$ points, and we set the parameter $\kappa=1$. To assess the sensitivity of our method, we then create additional settings by varying one parameter at a time from this benchmark configuration: we consider sample sizes of $n=100$ and $n=300$; grid sizes of $S=50$ and $S=200$; and parameter values of $\kappa=0.5$ and $\kappa=2$. For each of these settings, we compute the estimated conditional extremiles for a grid of levels $\tau \in \{0.1, 0.2, 0.3, 0.4, 0.5, 0.6, 0.7, 0.8, 0.9\}$. Unless specified otherwise, we use an Epanechnikov kernel for both the weighted local linear regression and the conditional CDF estimator.
For the bandwidth selection in our extremile estimator, we exclusively use the $\tau$-dependent adaptive bandwidth proposed in our methodology. This choice is deliberate, as it represents a fully automatized and theoretically-motivated procedure for local bandwidth selection. By adopting the complete method as intended by the original authors of extremile regression \citep{Daouia21}, our simulation study evaluates its practical performance, rather than testing it with other, more arbitrary bandwidth choices.

\begin{table}[h!]
\centering
\caption{Averaged Mean Squared Error (AMSE) for Scenario A. Standard deviations are in parentheses. For readability, the reported values have been multiplied by $10^3$.}
\label{tab:amse_scenario_a}
\begin{tabular}{@{}lccccccccc@{}}
\toprule
& \multicolumn{9}{c}{Extremile Level ($\tau$)} \\
\cmidrule(l){2-10}
Setting & 0.1 & 0.2 & 0.3 & 0.4 & 0.5 & 0.6 & 0.7 & 0.8 & 0.9 \\
\midrule
$n=200, S=100, \kappa=1$ & 7 (4) & 5 (3) & 3 (3) & 3 (2) & 3 (3) & 3 (2) & 3 (2) & 5 (3) & 7 (5) \\
\midrule
$n=100$ & 11 (7) & 7 (5) & 6 (4) & 5 (4) & 6 (4) & 5 (4) & 6 (5) & 8 (6) & 11 (8)\\
$n=300$ & 7 (4) & 5 (3) & 3 (2) & 3 (2) & 3 (2) & 3 (2) & 3 (2) & 5 (2) & 7 (4) \\
\midrule
$S=50$ & 7 (4) & 5 (3) & 3 (3) & 3 (2)& 3 (3) & 3 (2)& 3 (2) & 5 (3) & 7 (5) \\
$S=200$ & 8 (4) & 5 (3) & 3 (3) & 3 (2) & 3 (3) & 3 (2) & 3 (2) & 5 (3) & 7 (5) \\
\midrule
$\kappa=0.5$ & 7 (4) & 5 (3) & 3 (3) & 3 (2) & 3 (3) & 3 (2) & 3 (2)& 5 (3)& 7 (5)\\
$\kappa=2$ & 7 (4) & 5 (3) & 3 (3) & 3 (2) & 3 (3) & 3 (2) & 3 (2)& 5 (3)& 7 (5)\\
\bottomrule
\end{tabular}%

\end{table}

\begin{table}[h]
\centering
\caption{Averaged Mean Squared Error (AMSE) for Scenario B. Standard deviations are in parentheses. For readability, the reported values have been multiplied by $10^3$.}
\label{tab:amse_scenario_b}
\resizebox{\textwidth}{!}{
\begin{tabular}{@{}lccccccccc@{}}
\toprule
& \multicolumn{9}{c}{Extremile Level ($\tau$)} \\
\cmidrule(l){2-10}
Setting & 0.1 & 0.2 & 0.3 & 0.4 & 0.5 & 0.6 & 0.7 & 0.8 & 0.9 \\
\midrule
$n=200, S=100, \kappa=1$ & 32 (22) & 17 (15)& 16 (13)& 16 (14)& 18 (16)& 16 (14)& 16 (14)& 18 (18)& 30 (22)\\
\hline
$n=100$ & 66 (152) &34 (39) & 31 (32)& 32 (30)& 35 (31)& 31 (26)& 29 (25)& 30 (25)&50 (46) \\
$n=300$ & 31 (19) &18 (11) &16 (10) & 16 (11)& 18 (13)& 16 (11)& 16 (11)& 17 (11)& 27 (17)\\
\hline
$S=50$ & 32 (22) &17 (15) &16 (13) &16 (14) & 18 (16)& 16 (14)& 16 (14)& 18 (18)& 30 (21)\\
$S=200$  & 32 (22) &17 (15) &16 (13) &16 (14) & 18 (16)& 16 (14)& 16 (14)& 18 (16)& 30 (22)\\
\hline
$\kappa=0.5$ & 32 (22) &17 (15) &16 (13) &16 (14) & 18 (16)& 16 (14)& 16 (14)& 18 (18)& 30 (22)\\
$\kappa=2$  & 32 (22) &17 (15) &16 (13) &16 (14) & 18 (16)& 16 (14)& 16 (14)& 18 (18)& 30 (22)\\
\bottomrule
\end{tabular}%
}
\end{table}

The numerical results for the AMSE under Scenarios A and B are presented in Tables \ref{tab:amse_scenario_a} and \ref{tab:amse_scenario_b}, respectively. A primary finding is the substantial difference in performance between the two settings. The proposed method demonstrates significantly higher accuracy and precision in Scenario A, where the AMSE is roughly 4 to 6 times smaller than in Scenario B across all corresponding parameter configurations. This is further evidenced by the standard deviations, which are considerably larger in Scenario B, indicating that the estimates are not only less accurate but also less stable, particularly for the smaller sample size of $n=100$. Despite these differences in magnitude, several consistent patterns emerge within both scenarios. As expected, the estimator's performance improves with a larger sample size, as the AMSE and its standard deviation decrease when moving from $n=100$ to $n=300$. Conversely, the method shows a notable robustness to the number of grid points ($S$) and the choice of the tuning parameter ($\kappa$), with results remaining largely unchanged when these are varied. Finally, for both scenarios, the estimation error exhibits a symmetric pattern across the extremile levels, with the highest AMSE values occurring at the tails ($\tau=0.1$ and $\tau=0.9$), reflecting the inherent difficulty of estimation in regions with less data.

\begin{table}[h!]
\centering
\caption{Global crossing rates for the simulation study.}
\label{tab:crossing_rates}
\begin{tabular}{@{}l cc cc @{}}
\toprule
& \multicolumn{2}{c}{Scenario A} & \multicolumn{2}{c}{Scenario B} \\
\cmidrule(lr){2-3} \cmidrule(lr){4-5}
Setting & Extremile & Quantile & Extremile & Quantile \\
\midrule
$n=200, S=100, \kappa=1$ & 0.35 &0.98 & 1& 0.99\\
\hline
$n=100$ & 0.62& 1&1 & 0.99\\
$n=300$ & 0.38 & 0.88& 1& 0.99\\
\hline
$S=50$ & 0.35& 0.98& 1& 0.99\\
$S=200$ & 0.35 &  0.98& 1& 0.99\\
\hline
$\kappa=0.5$ & 0.35&0.98 &1 & 0.99\\
$\kappa=2$ &0.35 & 0.98&1 & 0.99\\
\bottomrule
\end{tabular}
\end{table}

The empirical crossing rates, presented in Table \ref{tab:crossing_rates}, reveal a key advantage of our proposed extremile estimator. In Scenario A, the extremile method demonstrates a markedly superior ability to maintain the non-crossing property, with a crossing rate of 0.35 in the benchmark case, compared to a near-certain crossing rate of 0.98 for the quantile-based approach. While both methods struggle in the more challenging Scenario B, with crossing rates at or near 1.0, it is crucial to note that our method performs no worse than the standard quantile estimator. Furthermore, the results show that the non-crossing property is primarily sensitive to the sample size, particularly for the extremile method in Scenario A, while remaining robust to changes in the grid size $S$ and the parameter $\kappa$. It is worth noting that this global crossing metric is particularly stringent, as neither estimation procedure is designed to enforce non-crossing simultaneously across a dense grid of curves. These findings underscore the practical benefit of our approach, as it significantly mitigates the common issue of curve crossing in functional data models without introducing any disadvantage in more complex settings.

\begin{table}[h!]
\centering
\caption{Comparison of Averaged Mean Squared Error (AMSE) for Gaussian and Epanechnikov kernels in the benchmark setting ($n=200, S=100, \kappa=1$) under Scenario A. For readability, reported values have been multiplied by $10^3$. Standard deviations are in parentheses.}
\label{tab:kernel_comparison}
\begin{tabular}{@{}lccccccccc@{}}
\toprule
& \multicolumn{9}{c}{Extremile Level ($\tau$)} \\
\cmidrule(l){2-10}
Kernel & 0.1 & 0.2 & 0.3 & 0.4 & 0.5 & 0.6 & 0.7 & 0.8 & 0.9 \\
\midrule
Epanechnikov & 7 (4) & 5 (3) & 3 (3) & 3 (2) & 3 (3) & 3 (2) & 3 (2) & 5 (3) & 7 (5) \\
Gaussian & 7 (4) & 4 (3) & 3 (2) & 2 (2) & 2 (2) & 2 (2) & 3 (2) & 4 (3) & 7 (4) \\
\bottomrule
\end{tabular}%

\end{table}

We now investigate the impact of the kernel function choice within the local linear regression step by repeating the benchmark simulation for Scenario A using a Gaussian kernel. The results, presented in Table \ref{tab:kernel_comparison}, show that the Gaussian kernel yields a slight improvement in performance. The AMSE is moderately lower for the central extremile levels, and the corresponding standard deviations are also slightly reduced, suggesting more stable estimates. Furthermore, the use of a Gaussian kernel also improves the non-crossing property, reducing the empirical crossing rate from 0.35 to 0.33.

\begin{table}[h!]
\centering
\caption{Averaged Prediction Mean Squared Error (APMSE) for the benchmark setting ($n=200, S=100, \kappa=1$) under Scenario A. For readability, reported values have been multiplied by $10^3$. Standard deviations are in parentheses.}
\label{tab:prediction_amse}
\begin{tabular}{@{}lccccccccc@{}}
\toprule
& \multicolumn{9}{c}{Extremile Level ($\tau$)} \\
\cmidrule(l){2-10}
Scenario & 0.1 & 0.2 & 0.3 & 0.4 & 0.5 & 0.6 & 0.7 & 0.8 & 0.9 \\
\midrule
$n=200, S=100, \kappa=1$ & 8 (5)& 5 (4)& 4 (3)& 3 (3)& 3 (4)& 3 (4)& 4 (4)& 5 (5)&8 (6)\\
\bottomrule
\end{tabular}%
\end{table}

In addition to assessing the in-sample fit, we conduct a final simulation study to evaluate the out-of-sample predictive performance of our estimator. For this analysis, we focus on the benchmark setting ($n=200, S=100, \kappa=1$) under Scenario A. In each of the $B=300$ repetitions, we randomly partition the data into a training set, containing 80\% of the observations, and a test set with the remaining 20\%. The model is fitted on the training data, and the resulting estimator is used to predict the conditional extremiles for the functional covariates in the test set. The predictive accuracy is quantified using the Prediction Mean Squared Error (PMSE). For a single simulation run $b$, the PMSE for a given level $\tau$ is calculated over the $n_{\text{test}}$ observations in the test set $\mathcal{I}_{\text{test}}^{(b)}$ as:
\begin{align*}
\text{PMSE}_\tau^{(b)} = \frac{1}{n_{\text{test}}} \sum_{i \in \mathcal{I}_{\text{test}}^{(b)}} \left(\xi_\tau(X_i) - \hat{\xi}_{\tau}^{(b)}(X_i)\right)^2,
\end{align*}
where $\hat{\xi}_{\tau}^{(b)}(X_i)$ is the predicted extremile for the new observation $X_i$. We then report the Averaged PMSE (APMSE) and its corresponding standard deviation across all $B$ repetitions, defined as:
\begin{align*}
\text{APMSE}_\tau = \frac{1}{B} \sum_{b=1}^B \text{PMSE}_\tau^{(b)} \quad \text{and} \quad \text{SD(PMSE}_\tau) = \sqrt{\frac{1}{B-1} \sum_{b=1}^B \left(\text{PMSE}_\tau^{(b)} - \text{APMSE}_\tau\right)^2}.
\end{align*}

The out-of-sample prediction results for the benchmark case are reported in Table \ref{tab:prediction_amse}. A comparison with the corresponding in-sample results from Table \ref{tab:amse_scenario_a} reveals a consistency in performance. The Averaged Prediction MSE (APMSE) values are only slightly higher than their in-sample counterparts, which is expected due to the model being evaluated on unseen data. The overall error pattern across the extremile levels and the magnitude of the standard deviations remain very similar. This  correspondence between the in-sample and out-of-sample performance provides compelling evidence of the model's ability to generalize well, suggesting that the conclusions drawn from the in-sample analysis are robust and reliably extend to predictive scenarios.

\section{Application to the Berkeley Growth data}
\label{sec:application}
In this section, we illustrate our proposed method using the Berkeley Growth data, extending the work of \cite{ferraty22}. We move beyond mean estimation to predict conditional extremiles of height at age 18 for both boys ($n_B=39$) and girls ($n_G=54$), conditional on their growth velocity profiles. 

The top row of Figure~\ref{fig:berk_curves} shows the individual and mean velocity curves, $\mu^B(t)$ and $\mu^G(t)$. While both groups exhibit a sharp velocity decrease in early years, a key difference emerges later: the boys' mean velocity continues a steady decline, whereas the girls' mean velocity stabilizes after age seven. A Functional Principal Component Analysis (FPCA) reveals that the primary mode of variation for each group is captured by their respective first eigenfunctions, $\varphi_1^B(t)$ and $\varphi_1^G(t)$, which explain roughly 65\% of variability for boys and 48\% for girls. The influence of these eigenfunctions is almost entirely concentrated in the first five years, representing the variability in how rapidly growth velocity decelerates during early childhood; their effect becomes negligible thereafter. The bottom row of Figure~\ref{fig:berk_curves} displays the growth velocity profiles constructed as $\mu^B(t) \pm 5\varphi_1^B(t)$ and $\mu^G(t) \pm 5\varphi_1^G(t)$, which represent these faster or slower growth decelerations and will now serve as the functional covariates for our analysis.

\begin{figure}[htbp]
    \centering
\includegraphics[width=\textwidth]{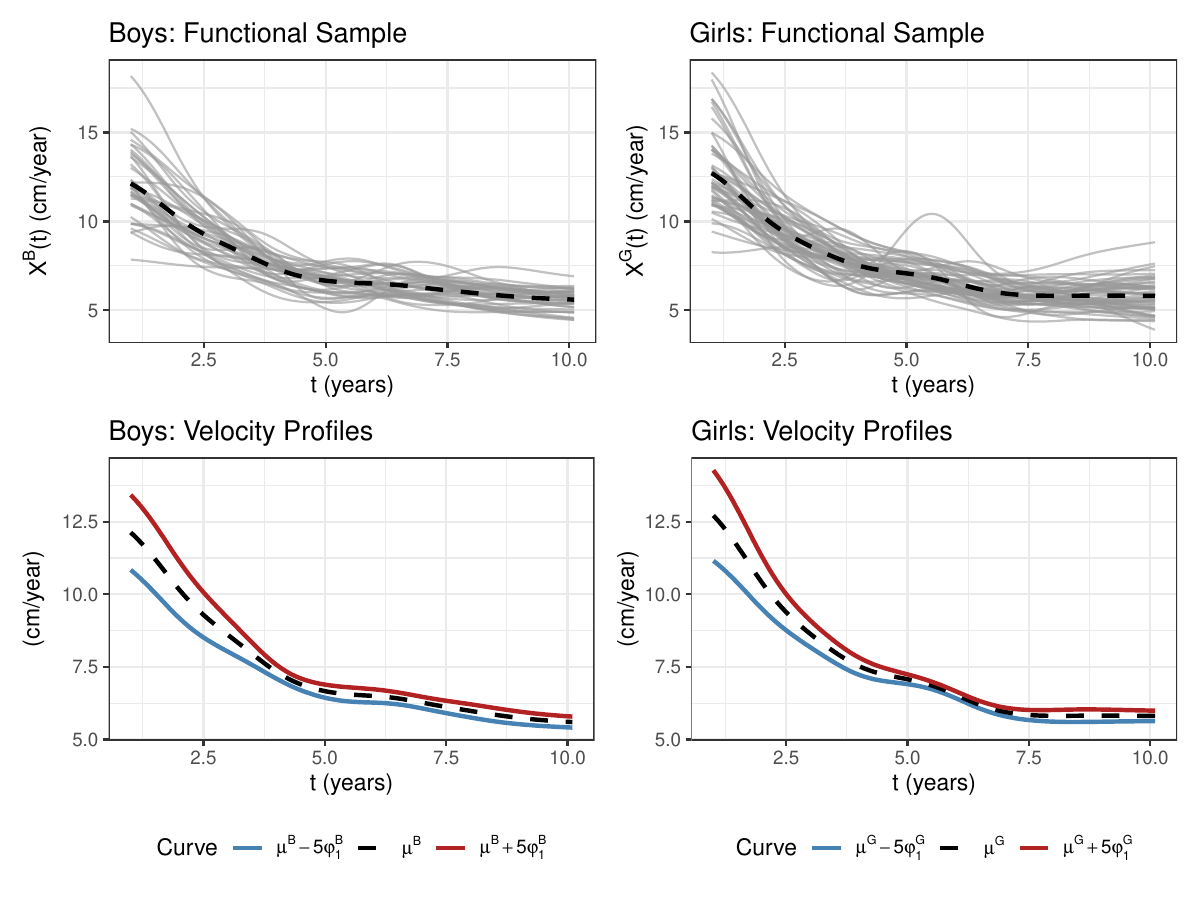}
    \caption{Top row: Growth velocity curves of $n_B=39$ boys (left) and of $n_G=54$ girls (right). The mean function of each sample is shown (dashed black line). Bottom row: Mean function (dashed black line), and variation to the mean given by the first eigenfunction (solid lines) for boys (left) and girls (right).}
    \label{fig:berk_curves}
\end{figure}

 Before analyzing the predicted conditional extremiles in our application, we recall their interpretation. While the extremile at level $\tau=0.5$ is equivalent to the conditional mean, extremiles for other levels provide information about the tails of the conditional distribution. Specifically, for a level $\tau>0.5$, the extremile represents the expected maximum value from a set of individuals sharing the same profile; for $\tau<0.5$, it represents the expected minimum. For instance, the extremile at $\tau=0.9$ corresponds to the expected maximum height among 6 individuals, at $\tau=0.1$ to the expected minimum among 6 individuals.

In Figure~\ref{fig:berk_extr} we show the predicted extremiles of height at 18 conditional on the velocity profiles of both groups. For boys, while all velocity profiles predict a similar mean height of around 180~cm, the pattern of predictions reverses from the lower to the upper tail. The profile with a slower initial deceleration, $\mu^B(t) - 5\varphi_1^B(t)$, is linked to the highest expected maximum height of approximately 186~cm at~$\tau=0.9$, but also the lowest expected minimum of 172~cm at~$\tau=0.1$. In contrast, the analysis for girls is straightforward and consistent across all levels: the profile with a faster initial deceleration, $\mu^G(t) + 5\varphi_1^G(t)$, is robustly associated with taller outcomes, predicting the highest minimums (around 162~cm), means (around 168~cm), and maximums (around 174~cm). 

Ultimately, this demonstrates the power of extremile regression: for boys, early growth velocity is not a simple predictor of height, but an indicator of potential and risk, while for girls, it is a direct marker of overall size. This nuanced relationship is completely invisible to a mean-only analysis.
\begin{figure}[htbp]
    \centering
\includegraphics[width=\textwidth]{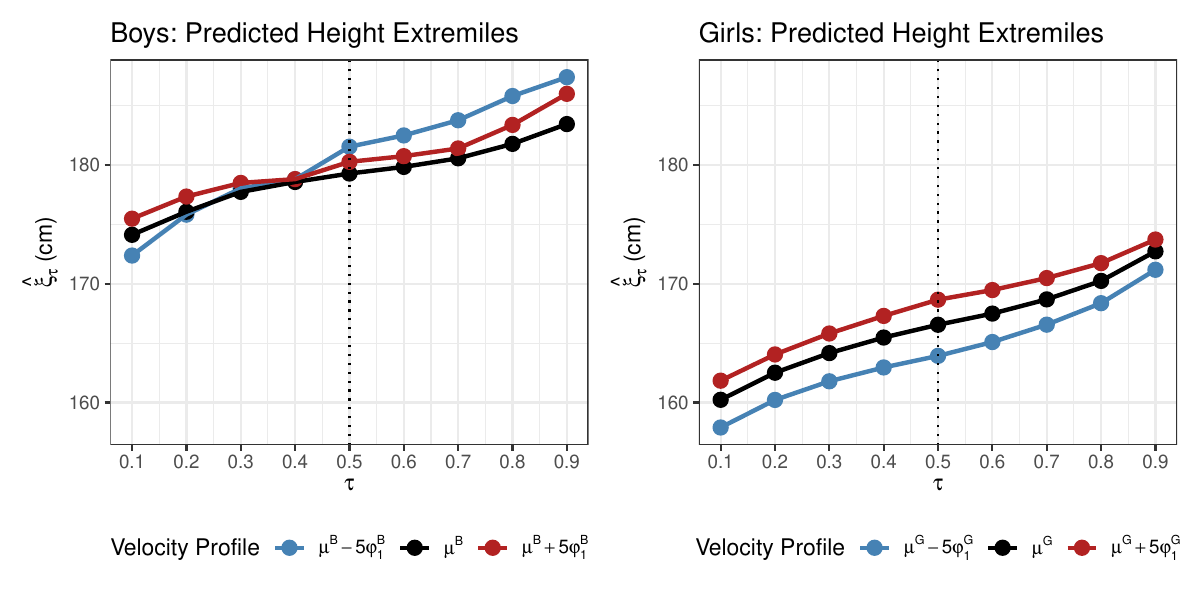}
    \caption{Predicted height extremiles corresponding to the three velocity profiles for boys (left) and girls (right). The vertical dotted line corresponds to the predicted local linear mean.}
    \label{fig:berk_extr}
\end{figure}

\section{Conclusion}
\label{sec:discussion}
In this work we adapted the scalar-on-function local-linear mean regression introduced by \cite{ferraty22} to extremile scalar-on-function regression. To do so, inspired by \cite{Daouia21}, we chose weights that are functions of the conditional CDF. We made the estimation as authomatized as possible by performing kernel estimation twice. First, we estimate the conditional CDF with the Nadaraya--Watson estimator adapted to functional covariates \citep{Ferraty10}. For the choice of local bandwith we use data-driven method introduced by \cite{chagny2014}. Second, we adapt the local bandwidth choice of \cite{ferraty22} by rescaling as \cite{Daouia21} suggests, so as to to make it extremile level-dependent. Notice that the choice of bandwidth in both cases is extremely important: in the former, it controls the extremile crossing, while in the latter it controls the ``influence" the sample has on the estimated extremile.

Thus, in addition to studying the asymptotic properties of the estimator, we tested the developed methodology with a simulation study that confirmed its practical effectiveness. A key finding relates to the common issue of curve crossing. Specifically, our estimator's performance is never worse than that of standard quantile techniques, and in some scenarios, it is substantially better. Moreover, the estimator proved robust to implementation choices, such as grid density and tuning parameters, while its strong out-of-sample performance underscored its ability to generalize well for predictive applications.
All the proposed methods are implemented in \texttt{R} \citep{Rmanual} with a package \texttt{ExtrFunReg}
 \citep{battagliola2024extrfunreg},
based on version 0.0.2 of package \texttt{fllr}\footnote{https://bitbucket.org/StanislavNagy/fllr/src/master/}, which is the implementation of the methodology introduced by \cite{ferraty22}. We include the scripts to generate all the plots in this work in the supplementary material.


Finally, our methodology was illustrated on the Berkeley Growth data, predicting adult height conditional on childhood growth velocity profiles. The analysis uncovered gender-specific relationships that a traditional mean regression would miss. For girls, a faster deceleration in early growth was directly associated with taller outcomes across all extremile levels. On the other hand, for boys a slower initial deceleration did not change the predicted mean height, but was linked to a wider range of potential outcomes, both the highest expected maximums and the lowest expected minimums. This highlights our method's ability to reveal how covariates can influence not just the center of a conditional distribution, but also its tails.

Concerning further developments, our work offers several avenues for extensions. Deriving the full asymptotic distribution of our estimator is a key next step. While this work establishes the estimator's consistency and convergence rates, a primary and challenging task in the functional setting, as noted by \cite{ferraty22}, the asymptotic distribution is necessary for constructing formal inference procedures like confidence intervals and hypothesis tests. This would likely involve extending the techniques from \cite{Daouia21} to the infinite-dimensional case, which represents a non-trivial but important avenue for future research.


Moreover, more complex functional models could be adapted to extremile regression. For instance, function-on-function models, where both responses and covariates are functions, and scalar-on-function mixed effects models for clustered or longitudinal data. Notice that such models are already implemented in the context of quantile regression (see for instance \cite{beyaztas2022} and \cite{battagliola2024}), and it would be worthwile to compare them to the corresponding extremile regression models, as was done in Section \ref{sec:simulations}.

Finally, it would be of great interest to extend the conditional extremile estimator to extreme levels $\tau \uparrow 1$ and $\tau \downarrow 0$, as done by \cite{Daouia21} in the case of multivariate covariates. Such extension to functional covariates, done by \cite{gardes2012} in the case of quantile scalar-on-function regression, needs to be handled with care, as operating with the distributions of functional random elements requires special mathematical care. In terms of threshold selection, methods from \cite{bladt2020threshold} could be useful in the functional setting. Nevertheless, given the amount of applications that can be represented via functional data, such methodology would be highly valuable. This could lead to more robust statistical models that are capable of dealing with extremes for a plethora of applications.

\section{Statements and Declarations}
The authors have no competing interests to declare that are relevant to the content of this article.

\section{Funding}
Martin Bladt was supported by the Carlsberg Foundation, grant number CF23-1096.

\bibliographystyle{plain}
\bibliography{main.bib}


\appendix                       
\section{Proofs}\label{app:A}

\subsection{Notation for proofs}
\label{sec:notation}

We keep part of the notation similar to the one of \cite{ferraty22} in order to make comparisons with their work straightforward.

In the following, we list the terms that appear in Lemmas \ref{lemma:S8} and \ref{lemma:S9}, and Corollary \ref{cor:upper_bound_alpha}:
\begin{align*}
\text{X}(\cdot) &\in E \text{ is a function drawn from the distribution of $X(\cdot)$}\\
    \pi_x(h)&=\Pr(||\text{X}-x||\le h), \text{ is continuous in a neighborhood of 0. }\\
    \vartheta_x(u)&=\lim_{h\downarrow 0}\frac{\pi_x(uh)}{\pi_x(h)}, \text{ for all } u>0\\
    \gamma^{p_1,\dots,p_K}_{j_1,\dots,j_K}(t)&=\mathbb{E}\left(\prod_{k=1}^K\langle \phi_{j_k},\text{X}-x \rangle^{p_k}\Big| ||\text{X}-x||^{\sum_k p_k}=t\right)\\
\tilde\gamma^{p_1,\dots,p_K}_{j_1,\dots,j_K}(t)&=\frac{\de}{\text{d}t}\gamma^{p_1,\dots,p_K}_{j_1,\dots,j_K}(t)\\
[\Gamma]_{j_1, j_2} &= \tilde\gamma^{1,1}_{j_1,j_2}(0)\\
\sigma^2(x) &= \text{Var}(Y | X=x)\\
\bbalpha^1_\tau(0|x,K) &= \sum_{k > K} \langle \phi_k, \xi^{'}_{\tau,x} \rangle \tilde{\gamma}_{k}^{1}(0)\\
\bbalpha^1_\tau(j|x,K) &= \sum_{k > K} \langle \phi_k, \xi^{'}_{\tau,x} \rangle \tilde{\gamma}_{j,k}^{1,1}(0)\\
\bbalpha^2_\tau(j|x,K) & = \sum_{k_1 , k_2 > K} \langle \phi_{k_1}, \xi^{'}_{\tau,x} \rangle \langle \phi_{k_2}, \xi^{'}_{\tau,x} \rangle \tilde{\gamma}_{j,k_1 ,k_2}^{2,1,1}(0)\\
\bbbeta_\tau^1(0|x)& =  \sum_{k_1 , k_2 > 0} \langle \xi_{\tau,x}^{''} \phi_{k_1}, \phi_{k_2} \rangle \tilde{\gamma}_{k_{1},k_{2}}^{1,1} (0) \\
\bbbeta_\tau^1(j|x)& =  \sum_{k_1 , k_2 > 0} \langle \xi_{\tau,x}^{''} \phi_{k_1}, \phi_{k_2} \rangle \tilde{\gamma}_{j,k_{1},k_{2}}^{1,1, 1} (0) \\
\bbbeta_\tau^2(0|x)& =  \sum_{k_1 , k_2, k_3, k_4 > 0} \langle \xi_{\tau,x}^{''} \phi_{k_1}, \phi_{k_2} \rangle \langle \xi_{\tau,x}^{''} \phi_{k_3}, \phi_{k_4} \rangle \tilde{\gamma}_{k_{1},k_{2}, k_{3},k_{4}}^{1,1, 1, 1} (0) \\
\bbbeta_\tau^2(j|x)& =  \sum_{k_1 , k_2, k_3, k_4 > 0} \langle \xi_{\tau,x}^{''} \phi_{k_1}, \phi_{k_2} \rangle \langle \xi_{\tau,x}^{''} \phi_{k_3}, \phi_{k_4} \rangle \tilde{\gamma}_{j,k_{1},k_{2}, k_{3},k_{4}}^{2,1,1, 1, 1} (0).
\end{align*}

\subsubsection*{Global bandwidth}
Lemma \ref{lemma:others} involves the terms
\begin{align*}
\delta_{\tau,0} & =  \frac{\sum_{i=1}^n  J_\tau (F(Y_i|x) L(h^{-1}||X_i - x||)}{n\mathbb{E}(J_\tau (F(Y_1|x)) L(h^{-1}||X_1 - x||))} \\
\delta_{\tau,j} & =  \frac{\sum_{i=1}^n \langle \phi_j , X_i - x\rangle J_\tau (F(Y_i|x) L(h^{-1}||X_i - x||)}{n\mathbb{E} (J_\tau (F(Y_1|x)) L(h^{-1}||X_1 - x||))} \\
[\Delta_\tau]_{j_1, j_2} & =  \frac{\sum_{i=1}^n \langle \phi_{j_1} , X_i - x\rangle \langle \phi_{j_2} , X_i - x\rangle J_\tau (F(Y_i|x) L(h^{-1}||X_i - x||)}{n\mathbb{E}(J_\tau (F(Y_1|x)) L(h^{-1}||X_1 - x||))} \\
A_{\tau,0} & =  \frac{\sum_{i=1}^n \langle \mathcal{P}^\perp_K \xi^{'}_{\tau,x} , X_i - x\rangle  L(h^{-1}||X_i - x||)}{n\mathbb{E} ( L(h^{-1}||X_1 - x||))} \\
A_{\tau,j} & =  \frac{\sum_{i=1}^n \langle \mathcal{P}^\perp_K \xi^{'}_{\tau,x} , X_i - x\rangle \langle \phi_j, X_i - x \rangle  L(h^{-1}||X_i - x||)}{n\mathbb{E} ( L(h^{-1}||X_1 - x||))} \\
B_{\tau,0} & =  \frac{\sum_{i=1}^n R_{\tau,x,x,i}  L(h^{-1}||X_i - x||)}{n\mathbb{E} ( L(h^{-1}||X_1 - x||))} \\
B_{\tau,j} & =  \frac{\sum_{i=1}^n R_{\tau,x,x,i} \langle \phi_j, X_i - x \rangle  L(h^{-1}||X_i - x||)}{n\mathbb{E} ( L(h^{-1}||X_1 - x||))} \\
C_{\tau,0} & =  \frac{\sum_{i=1}^n (R_{\tau,\rho,x,i} - R_{\tau,x,x,i} ) L(h^{-1}||X_i - x||)}{n\mathbb{E} ( L(h^{-1}||X_1 - x||))} \\
C_{\tau,j} & =  \frac{\sum_{i=1}^n (R_{\tau,\rho,x,i} - R_{\tau,x,x,i} )  \langle \phi_j, X_i - x \rangle  L(h^{-1}||X_i - x||)}{n\mathbb{E} ( L(h^{-1}||X_1 - x||))},
\end{align*}
and additional terms in proof on Theorem 3.2 are
\begin{align*}
\delta_0 & =  \frac{\sum_{i=1}^n   (L(h^{-1}||X_i - x||))^2}{(n\mathbb{E} (L(h^{-1}||X_1 - x||)))^2} \\
\delta_j & =  \frac{\sum_{i=1}^n \langle \phi_j , X_i - x\rangle  (L(h^{-1}||X_i - x||))^2}{(n\mathbb{E} (L(h^{-1}||X_1 - x||)))^2} \\
[\Delta]_{j_1, j_2} & =  \frac{\sum_{i=1}^n \langle \phi_{j_1} , X_i - x\rangle \langle \phi_{j_2} , X_i - x\rangle  (L(h^{-1}||X_i - x||))^2}{(n\mathbb{E} (L(h^{-1}||X_1 - x||)))^2}.
\end{align*}
\subsubsection*{Local bandwidth}
Lemma \ref{lemma:elements_knn} and Theorem 3.3 involve
\begin{align*}
\delta^\ell_{\tau,0} & =  \frac{\sum_{i=1}^n  J_\tau (F(Y_i|x) L((h^k_{i})^{-1}||X_i - x||)}{n\mathbb{E} (J_\tau (F(Y|x)) L((h_{1}^k)^{-1}||X_1 - x||))} \\
\delta^\ell_{\tau,j} & =  \frac{\sum_{i=1}^n \langle \phi_j , X_i - x\rangle J_\tau (F(Y_i|x) L((h^k_{i})^{-1}||X_i - x||)}{n\mathbb{E} (J_\tau (F(Y|x)) L((h_{1}^k)^{-1}||X_1 - x||))} \\
[\Delta^\ell_\tau]_{j_1, j_2} & =  \frac{\sum_{i=1}^n \langle \phi_{j_1} , X_i - x\rangle \langle \phi_{j_2} , X_i - x\rangle J_\tau (F(Y_i|x) L((h^k_{i})^{-1}||X_i - x||)}{n\mathbb{E} (J_\tau (F(Y|x)) L((h_{1}^k)^{-1}||X_1 - x||))} \\
A^\ell_{\tau,0} & =  \frac{\sum_{i=1}^n \langle \mathcal{P}^\perp_K \xi^{'}_{\tau,x} , X_i - x\rangle  L((h^k_{i})^{-1}||X_i - x||)}{n\mathbb{E} ( L((h_{1}^k)^{-1}||X_1 - x||))} \\
A^\ell_{\tau,j} & =  \frac{\sum_{i=1}^n \langle \mathcal{P}^\perp_K \xi^{'}_{\tau,x} , X_i - x\rangle \langle \phi_j, X_i - x \rangle  L((h^k_{i})^{-1}||X_i - x||)}{n\mathbb{E} (L((h_{1}^k)^{-1}||X_1 - x||))} \\
B^\ell_{\tau,0} & =  \frac{\sum_{i=1}^n R_{\tau,x,x,i}  L((h^k_{i})^{-1}||X_i - x||)}{n\mathbb{E} ( L((h_{1}^k)^{-1}||X_1 - x||))} \\
B^\ell_{\tau,j} & =  \frac{\sum_{i=1}^n R_{\tau,x,x,i} \langle \phi_j, X_i - x \rangle  L((h^k_{i})^{-1}||X_i - x||)}{n\mathbb{E} (L((h_{1}^k)^{-1}||X_1 - x||))}.
\end{align*}

\subsection{Assumptions}
\label{sec:assumptions}
\subsubsection*{Local linear scalar-on-function regression}
The following are the assumptions from \cite{ferraty22}, adapted to the local linear extremile regression.
\begin{enumerate}
\renewcommand{\labelenumi}{(A\arabic{enumi})}
\item For any $u$ in the neighborhood of $0\in E$, there exists $\nu = x + tu$, with $t\in (0,1)$, such that, for fixed $\tau \in (0,1)$
\begin{align*}
    \xi_\tau(x + u) = \xi_\tau(x) + \langle \xi_{\tau,x}^{'}, u \rangle + \frac{1}{2} \langle \xi_{\tau, \nu}^{''} u, u \rangle, 
\end{align*}
with extremile regression operator $\xi_\tau: E \to \mathbb{R} $, its functional derivative $\xi^{'}_{\tau,x}: E \to E$ and $\xi^{''}_{\tau,\nu}: E \to E$ is a Hilbert--Schmidt linear operator such that there exists $C \leq 0$ for which $d(\xi^{''}_{\tau,\nu_1}, \xi^{''}_{\tau,\nu_2}) \leq C d(\nu_1, \nu_2)$ for any 
$\nu_1(\cdot), \nu_2(\cdot)$ in the neighborhood of $x(\cdot) \in E$.
\item The kernel function $L$ is continuously differentiable on its support $(0,1)$ with $L'(s)\leq 0$ for all $s \in (0,1)$ and $L(1) >0$.
    \item The functions $\gamma_{j_1}^1, \gamma_{j_{1}, j_{2}}^{1, 1}, \dots, \gamma_{j_{1}, j_{2} , j_{3}, j_{4}}^{1, 1, 1, 1}$,  $\gamma_{j_1}^2, \gamma_{j_{1}, j_{2} , j_{3}}^{2, 1, 1}, \dots, \gamma_{j_{1}, j_{2} , j_{3}, j_{4}, j_{5}}^{2, 1, 1, 1, 1}$, $\gamma_{j_{1}, j_{2}}^{2, 2}$ are continuously differentiable around 0. The smallest eigenvalue of $\Gamma$, denoted $\lambda_K$, is strictly positive.
  \item The bandwidth sequence $h=h_n$ is such that as $n\to\infty$ we have $h_n \to 0$; The number of basis functions used $K=K_n$ is such that $K_n\to \infty$ and $n\pi_x(h)\to \infty$. Finally, $h\sqrt{K} = \smallo(1)$, $\frac{1}{h \lambda_K \sqrt{n\pi_x(h)}} = \smallo(1)$ and $\frac{\sqrt{K}}{h \sqrt{\lambda_K n\pi_x(h)}} = \smallo(1)$.
      \item For any $s>0$, $\frac{\pi_x(sh)}{\pi_x(h)} \to \theta_x(s)$ as $h\to 0$.
    \item For fixed $\tau \in (0,1)$, operator $\sigma^2_\tau(x) = \text{Var}(J_\tau(F(Y|X))Y|X=x)$ is uniformly continuous.
    \item $h \mapsto \pi_x(h)$ is a continuous function in the neighborhood of 0 such that $\theta_x(s) \geq s$ and $\theta_x(s^{-1})\leq s^{-1}$ $\forall s >1$; moreover, the number of neighbours in the kNN algorithm $k=k_n$ is such that $\frac{k}{n}=\smallo(1)$, $\frac{\sqrt{K}}{\pi_x\left(\frac{k}{n}\right)}=\smallo(1)$,
    $\frac{1}{\pi_x\left(\frac{k}{n}\right)\lambda_K \sqrt{k}}=\smallo(1)$, and 
$\frac{\sqrt{\lambda_K}}{\pi_x\left(\frac{k}{n}\right)\sqrt{K k}}=\smallo(1)$,
 with $K = K_n \to \infty$ as $n\to \infty$.
\end{enumerate}

\subsubsection*{Uniform consistency of conditional cumulative distribution function estimator}
We rely on the assumption from \cite{Ferraty10}, and we re-state them here for completeness.

Define the subspaces $\mathcal{S}_x \subset E$ and $\mathcal{S}_y \subset \mathbb{R}$, and the open ball $ B_r(x) = \{\tilde{x} \in E : d(x, \tilde{x}) \leq r \}$. Two functions are key components in this framework: $\varphi$, controlling the concentration of probability measure of functional random variable $X$ on $B_r(x)$, and $\psi_{\mathcal{S}}(\epsilon)$, Kolmogorov's $\epsilon$-entropy of a set $\mathcal{S} \subset E$. In particular, the latter is defined as the minimal number of open balls of radius $\epsilon$ needed to cover $\mathcal{S}$. Intuitively, the lower $\psi_{\mathcal{S}}(\epsilon)$, the less complex the topological structure of $\mathcal{S}$.

Now consider kernel $L_F$ with bandwidth $h_F$ such that $h_F = h_{F,n} \to 0$ with $n \to \infty$. Consider the following assumptions.
\begin{enumerate}[label=(A\arabic*)] 
\setcounter{enumi}{7}

\item For all $(y_1, y_2) \in \mathcal{S}_y  \times \mathcal{S}_y$, and all  $(x_1, x_2) \in \mathcal{S}_x  \times \mathcal{S}_x$ 
$$|F(y_1|x_1) - F(y_2|x_2)| \leq C\left(d(x_1,x_2)^{b_x} + |y_1 - y_2|^{b_y} \right),$$
with $C, b_x, b_y>0$.

\item The functions $\varphi$ and $\psi_{\mathcal{S}_x}$ are such that
\begin{itemize}
    \item There exist $ C>0$, and $\eta_0 >0$, s.t. for all $ \eta < \eta_0$ 
    $$\frac{\text{d}}{\text{d}t} \varphi (t)\biggr\rvert_{\eta} <C.$$ \\
    Additionally, if $L^F(1)=0$: there exist $C>0$, and $ \eta_0 >0$, s.t. for all $ 0<\eta < \eta_0 $
    $$\int_0^\eta \varphi(u) \text{d}u > C\eta \varphi(\eta).$$
    \item For $n$ large enough:
    $$ \frac{(\log n)^2}{n \varphi(h_F)} < \psi_{\mathcal{S}_x} \left(\frac{\log n}{n}\right) < \frac{n \varphi(h_F)}{\log n}.$$
\end{itemize}

 \item $\forall x \in \mathcal{S}_x$ 
$$0 < C \varphi(h) \leq \Pr(X \in B_r(x)) \leq C^{'} \varphi(h) < \infty, $$
with $C,C^{'} >0$.

\item $L_F$ is a bounded and Lipschitz kernel on its support $[0,1)$, and if $L_F(1)=0$, the kernel $L_F(1)$ fulfills the additional condition $-\infty < C < \frac{\text{d}}{\text{d}t}L_F(t) < C^{'} < 0$, with $C,C^{'} >0$.

\item For some $\beta >1$, Kolmogorov's $\varepsilon$-entropy of $\mathcal{S}_x$ satisfies
\begin{align*}
    \sum_{n=1}^\infty n^\frac{1}{2b_y} e^{(1-\beta)\psi_{\mathcal{S}_x}\left(\frac{\log n}{n}\right)} < \infty.
\end{align*}
\end{enumerate}

It was shown in \cite{Ferraty10} that, under assumptions (A8) -- (A12), the estimator 
\begin{align*}
    \label{eq:F_hat_fct}
    \hat{F}_{h_{F}}(y|x) = \frac{\sum_{i=1}^n L_F \left(h_F^{-1}||x-X_i||\right) {1}(Y_i \leq y) }{\sum_{i=1}^n L_F \left(h_F^{-1}||x-X_i||\right)}, \hspace{5mm} \forall y\in \mathbb{R}, \forall x\in E,
\end{align*}
is uniformly consistent. In particular, for $\mathcal{S}_x \subset E$ and $\mathcal{S}_y \subset \mathbb{R}$
\begin{align*}
    \sup_{x \in \mathcal{S}_x } \sup_{y \in  \mathcal{S}_y }|F(y|x) - \hat{F}_{h_{F}}(y|x) | \to 0 \text{ a.s.}
\end{align*}


\subsection{Technical lemmas}
This section is dedicated to stating and proving the technical lemmas involved in the proofs in Section \ref{sec:proof_main_thms}. Notice that Lemmas \ref{lemma:S8} and \ref{lemma:S8_kNN} are pivotal to generalize the local linear scalar-on-functional regression to the weighted version we propose. The rest of the results follow from the said lemmas and further results in the supplementary of \cite{ferraty22}. Finally, Lemma \ref{lemma:kNN_pi} corresponds to Lemma S.5 of \cite{ferraty22} and we add it here for completeness.

\subsubsection*{Global bandwidth}

\begin{lemma}
\label{lemma:S8}
For $i = 1,\dots,n$, $p\geq0$, $q>0$, and $\tau \in (0,1)$, we have
$$    \int_\mathbb{R}\int_0^1 t^p\left[L(t)J_{\tau}(\hat{F}_{h_{F}}(y | x))\right]^q  \text{d}F_{||X_i - x ||/h}(t) \text{d}F_{Y}(y) = C_{\tau}(p,q|x) \pi_x(h)(1 + \smallo(1)),$$
where $C_{\tau}(p,q|x)=\mathbb{E}(J_\tau(F(Y|x))^q) \left\{ L^q(1)-\int_0^1 \frac{\de}{\text{d}u}(u^p L^q(u))\vartheta_x(u)\text{d}u\right \}$.

\end{lemma}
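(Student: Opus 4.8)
The plan is to compute the double integral by first conditioning on the value of $Y$ and treating the inner integral over the normalized distance $t = \|X_i - x\|/h$ as the core quantity. The key structural observation is that the integrand factors: the term $[L(t) J_\tau(\hat{F}_{h_F}(y|x))]^q = L^q(t) J_\tau(\hat{F}_{h_F}(y|x))^q$, so the $J_\tau$ factor depends only on $y$ and the kernel-distance factor depends only on $t$. This lets us separate the double integral into a product-like form, handling the $y$-integration and the $t$-integration in sequence.

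First I would focus on the inner integral $\int_0^1 t^p L^q(t)\, \mathrm{d}F_{\|X_i-x\|/h}(t)$. The standard technique here, following \cite{ferraty22}, is integration by parts combined with the small-ball probability asymptotics. Writing $G_h(t) = \Pr(\|X_i - x\|/h \le t) = \pi_x(th)$, integration by parts gives
\begin{align*}
\int_0^1 t^p L^q(t)\, \mathrm{d}G_h(t) = \left[t^p L^q(t) G_h(t)\right]_0^1 - \int_0^1 \frac{\mathrm{d}}{\mathrm{d}t}\left(t^p L^q(t)\right) G_h(t)\, \mathrm{d}t.
\end{align*}
Dividing through by $\pi_x(h)$ and using assumption (A5), namely $\pi_x(th)/\pi_x(h) \to \vartheta_x(t)$ as $h \to 0$, the boundary term at $t=1$ yields $L^q(1)\pi_x(h)$ (since $\vartheta_x(1)=1$) while the boundary term at $t=0$ vanishes, and the remaining integral converges to $\pi_x(h)\int_0^1 \frac{\mathrm{d}}{\mathrm{d}u}(u^p L^q(u))\vartheta_x(u)\,\mathrm{d}u$. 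This produces exactly the bracketed constant in $C_\tau(p,q|x)$, up to the $\pi_x(h)(1+\smallo(1))$ factor. The change of variables from the true distance to $t=\|X_i-x\|/h$ is what introduces the $\pi_x(h)$ scaling.

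Next I would address the outer integration over $y$, which contributes the factor $\int_\mathbb{R} J_\tau(\hat{F}_{h_F}(y|x))^q\, \mathrm{d}F_Y(y)$. The subtlety is that this involves the \emph{estimated} CDF $\hat{F}_{h_F}$ rather than the true $F(\cdot|x)$, yet the claimed constant $C_\tau(p,q|x)$ contains $\mathbb{E}(J_\tau(F(Y|x))^q)$ with the true conditional CDF. Here I would invoke the uniform consistency of $\hat{F}_{h_F}$ established under (A8)--(A12), together with continuity of $J_\tau$, to argue that $J_\tau(\hat{F}_{h_F}(y|x))^q \to J_\tau(F(y|x))^q$ uniformly, so that $\int_\mathbb{R} J_\tau(\hat{F}_{h_F}(y|x))^q\,\mathrm{d}F_Y(y) = \mathbb{E}(J_\tau(F(Y|x))^q)(1+\smallo(1))$. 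Absorbing this error into the $(1+\smallo(1))$ term and combining with the inner-integral asymptotics gives the stated result.

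The main obstacle I expect is the rigorous control of the replacement of $\hat{F}_{h_F}$ by $F$ inside the possibly unbounded function $J_\tau$: near the tails of the distribution $J_\tau$ can grow (recall $J_\tau = K_\tau'$ with $K_\tau$ involving powers $r(\tau), s(\tau)$), so uniform consistency of $\hat F_{h_F}$ on its own may not immediately control $\int J_\tau(\hat F_{h_F})^q\,\mathrm{d}F_Y$ without an integrability or moment argument to dominate the tail contribution. A secondary technical point is justifying that the integration-by-parts boundary terms and the limit interchange are valid under the stated differentiability of $L$ on $(0,1)$ with $L(1)>0$ from (A2), particularly ensuring the behavior at the endpoint $t=1$ where the kernel does not vanish. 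Once these two issues are handled, the remaining algebra is the routine bookkeeping already present in the analogous unweighted result of \cite{ferraty22}.
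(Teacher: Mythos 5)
Your proposal is correct and follows essentially the same route as the paper's proof: factor the integrand into a $t$-part and a $y$-part, obtain the $\pi_x(h)\bigl\{L^q(1)-\int_0^1 \frac{\mathrm{d}}{\mathrm{d}u}(u^pL^q(u))\vartheta_x(u)\,\mathrm{d}u\bigr\}$ asymptotics for the inner integral exactly as in Lemma S.8 of \cite{ferraty22}, and pass from $\hat F_{h_F}$ to $F$ in the $y$-integral via uniform consistency. The only cosmetic difference is that the paper encodes that last step through the ratio $R_{s(\tau)}(y)=(1-\hat F_{h_F}(y|x))^{s(\tau)-1}/(1-F(y|x))^{s(\tau)-1}\to 1$ while you invoke continuity of $J_\tau$ directly, and the tail-growth obstacle you flag does not actually arise since for fixed $\tau\in(0,1)$ the weight $J_\tau$ is bounded on $[0,1]$ by $\max(s(\tau),r(\tau))$.
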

\begin{proof}
For $\tau \in (0,1/2]$, define 
$ R_{s(\tau)}(y) = \frac{(1-\hat{F}_{h_{F}}(y | x))^{s(\tau)-1}}{(1-F(y | x))^{s(\tau)-1}}. $
Then $R_{s(\tau)} \to 1$ uniformly almost surely by Theorem 4 of \cite{Ferraty10} and so it may be seen that using the convergence result of Lemma S8 in \cite{ferraty22} we readily obtain

\begin{align*}
    &   \int_\mathbb{R}\int_0^1 t^p\left[L(t)s(\tau)(1-\hat{F}_{h_{F}}(y | x))^{s(\tau)-1}\right]^q  \text{d}F_{||X_i - x ||/h}(t) \text{d}F_{Y}(y) \\
   &= s(\tau)^q \left [ \int_0^1 t^p L(t)^q \text{d}F_{||X_i - x ||/h}(t) \right ] \int_\mathbb{R}(1-\hat{F}_{h_{F}}(y | x))^{q(s(\tau)-1)} \text{d}F_{Y}(y) \\
   & = s(\tau)^q \left [\int_0^1 t^p L(t)^q \text{d}F_{||X_i - x ||/h}(t) \right ] \int_\mathbb{R} \left(R_{s(\tau)}(y) (1-F(y | x))\right)^{q(s(\tau)-1)} \text{d}F_{Y}(y) \\
   & = s(\tau)^q \left [ \int_0^1 t^p L(t)^q \text{d}F_{||X_i - x ||/h}(t) \right ] \mathbb{E} \left[ \left(R_{s(\tau)}(y) (1-F(y | x))\right)^{q(s(\tau)-1)} \right ] \\
   & = s(\tau)^q \left [ \int_0^1 t^p L(t)^q \text{d}F_{||X_i - x ||/h}(t) \right ]  \mathbb{E} \left[ \left( (1-F(y | x))\right)^{q(s(\tau)-1)} \right ](1 + \smallo(1)) \\
   & = C_{\tau}(p,q|x) \pi_x(h)(1 + \smallo(1)),
\end{align*}
where  
$$C_{\tau}(p,q|x) =   \mathbb{E} \left[s(\tau)^q  (1-F(Y | x))^{q(s(\tau)-1)} \right ]\left\{ L^q(1)-\int_0^1 \frac{\de}{\text{d}u}(u^p L^q(u))\vartheta_x(u)\text{d}u\right \}.$$
With similar arguments, the result holds for $\tau \in [1/2,1)$ with the constant 
$$C_{\tau}(p,q|x) =   \mathbb{E} \left[r(\tau)^q  F(Y | x)^{q(r(\tau)-1)} \right ]\left\{ L^q(1)-\int_0^1 \frac{\de}{\text{d}u}(u^p L^q(u))\vartheta_x(u)\text{d}u\right \}.$$ Collecting the constants proves the result. 

\end{proof}

\begin{corollary}
\label{cor:moments_L}
For any $q>0$
\begin{align*}
    \mathbb{E}((L(h^{-1}||x-X ||))^q) = C_{0.5}(x|0,q)\pi_x(h)\{1 + \smallo(1) \}.
\end{align*}
\end{corollary}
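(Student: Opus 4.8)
The plan is to derive Corollary~\ref{cor:moments_L} as a direct specialization of Lemma~\ref{lemma:S8} at the symmetric extremile level $\tau = 0.5$. The key observation is that $J_{0.5}(\cdot) \equiv 1$, as noted in the discussion following \eqref{eq:est_pb_fct_approx}, so the weight factor involving the conditional CDF estimator collapses to unity. This should make the $J_\tau$-dependent expectation in the constant $C_\tau(p,q|x)$ reduce to the trivial factor $\mathbb{E}(1) = 1$, leaving only the geometric kernel-integral piece.

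Concretely, I would first set $\tau = 0.5$ and $p = 0$ in the statement of Lemma~\ref{lemma:S8}. With $p = 0$, the inner integral $\int_0^1 t^p[\,L(t)J_\tau(\hat{F}_{h_F}(y|x))\,]^q\,\de F_{\|X_i - x\|/h}(t)$ loses its polynomial weight, and since $J_{0.5}(\hat{F}_{h_F}(y|x)) = 1$, the bracketed term becomes simply $L(t)^q$, which no longer depends on $y$. Consequently the outer integral over $\de F_Y(y)$ integrates a constant and contributes a factor of $1$. The left-hand side of Lemma~\ref{lemma:S8} then reduces exactly to $\int_0^1 L(t)^q\,\de F_{\|X_i - x\|/h}(t)$, which by definition of the distribution of $\|X - x\|/h$ is $\mathbb{E}((L(h^{-1}\|x - X\|))^q)$.

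For the right-hand side, I would evaluate the constant $C_\tau(p,q|x)$ at these values. Setting $\tau = 0.5$ gives $s(0.5) = \log(1/2)/\log(1/2) = 1$, so the factor $\mathbb{E}(J_{0.5}(F(Y|x))^q) = \mathbb{E}(1) = 1$. Setting $p = 0$ simplifies the derivative term to $\frac{\de}{\de u}(L^q(u))$, so that the remaining geometric factor is $L^q(1) - \int_0^1 \frac{\de}{\de u}(L^q(u))\vartheta_x(u)\,\de u$, which is precisely the quantity denoted $C_{0.5}(x|0,q)$ (equivalently $C_{0.5}(0,q|x)$) in the notation of the corollary. Combining these two sides yields the claimed asymptotic equivalence with the $\pi_x(h)\{1 + \smallo(1)\}$ rate inherited directly from Lemma~\ref{lemma:S8}.

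The proof is essentially a bookkeeping exercise, so I do not expect a genuine mathematical obstacle. The one point requiring mild care is confirming that the separation of the double integral in Lemma~\ref{lemma:S8} is legitimate when $p = 0$ and $\tau = 0.5$: one must check that the $y$-dependence truly factors out so the $\de F_Y(y)$ integration contributes unity. Since $J_{0.5} \equiv 1$ removes all dependence on $y$ and on $\hat{F}_{h_F}$, this factorization is immediate, and no uniform-consistency argument for $\hat{F}_{h_F}$ (needed in the general proof via the ratio $R_{s(\tau)}$) is required here. Thus the corollary follows cleanly by substitution.
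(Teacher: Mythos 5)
Your proof is correct and takes the same route as the paper, which simply notes that the result follows from $J_{0.5}(F(y|x))=1$ applied to Lemma~\ref{lemma:S8}; your write-up is just a more explicit version of that one-line specialization (including the correct observation that $s(0.5)=1$ and that the argument-order in $C_{0.5}(x|0,q)$ is a notational slip for $C_{0.5}(0,q|x)$).
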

\begin{proof}
The result follows from $J_{0.5}(F(y|x))=1$.
\end{proof}

\begin{lemma}
\label{lemma:S9}
  We have that
  \begin{enumerate}
      \item For any $q>0$ 
      \begin{align*}
      &\mathbb{E}((L(h^{-1}(x-X)))^q J_{\tau}(\hat{F}_{h_{F}}(Y | x))\gamma_{j_1,\dots,j_K}^{p_1,\dots,p_K}(||X - x ||^p))\\
      &=h^p\, C_{\tau}(p,q|x)\, \tilde\gamma_{j_1,\dots,j_K}^{p_1,\dots,p_K}(0) \pi_x(h)(1 + \smallo(1)),
    \end{align*}
    with $p = \sum_{k=1}^K p_{k}$.
    \item All the following sums are smaller or equal to $1$:
    \begin{align*}
        &\sum_j (\tilde{\gamma}_j^1(0))^2,\quad
        \sum_{j_1,j_2} (\tilde{\gamma}_{j_1,j_2}^{1,1}(0))^2,\quad
        \sum_{j_1,j_2,j_3} (\tilde{\gamma}_{j_1,j_2, j_3}^{1,1,1}(0))^2,\\
        &\sum_{j_1,j_2,j_3, j_4} (\tilde{\gamma}_{j_1,j_2, j_3, j_4}^{1,1,1,1}(0))^2, \quad
        \sum_j \tilde{\gamma}_j^2(0), \quad
        \sum_{j_1,j_2} \tilde{\gamma}_{j_1,j_2}^{2,2}(0).
    \end{align*}
    \item $\Gamma$ is positive semidefinite.
\end{enumerate}
    
\end{lemma}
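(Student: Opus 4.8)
For part~1, I would mirror the factorization used in Lemma~\ref{lemma:S8}. Writing the expectation as a double integral over the distributions of $\|\text{X}-x\|/h$ and of $Y$, the factors $(L(t))^q$ and $\gamma^{p_1,\dots,p_K}_{j_1,\dots,j_K}(\|\text{X}-x\|^p)$ depend on the norm alone and so remain in the norm-integral, while the $Y$-integral is handled exactly as in Lemma~\ref{lemma:S8} through the uniform ratio convergence $R_{s(\tau)}\to1$ of \cite{Ferraty10}, contributing the factor $\mathbb{E}(J_\tau(F(Y|x))^q)$. It then remains to extract the $h^p$ scaling carried by $\gamma$: since every power $p_k$ is positive, $\gamma^{p_1,\dots,p_K}_{j_1,\dots,j_K}(0)=0$, so the continuous differentiability assumed in (A3) gives the first-order expansion $\gamma(s)=\tilde\gamma(0)\,s+\smallo(s)$; on the kernel support the argument $s=\|\text{X}-x\|^p$ is of order $h^p$, and integrating against the norm measure as in Lemma~\ref{lemma:S8} yields $h^p\,C_\tau(p,q|x)\,\tilde\gamma^{p_1,\dots,p_K}_{j_1,\dots,j_K}(0)\,\pi_x(h)(1+\smallo(1))$.

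For part~2, the unifying tool is Parseval's identity $\sum_k v_k^2=\|\text{X}-x\|^2$, with $v_k=\langle\phi_k,\text{X}-x\rangle$, combined with conditioning on the norm. For the two pure-power sums I would use the trace identities $\sum_j\gamma^2_j(t)=\mathbb{E}(\|\text{X}-x\|^2\mid\|\text{X}-x\|^2=t)=t$ and, analogously, $\sum_{j_1,j_2}\gamma^{2,2}_{j_1,j_2}(t)=t$; dividing by $t$, letting $t\downarrow0$, and invoking Fatou's lemma (each summand is nonnegative) gives $\sum_j\tilde\gamma^2_j(0)\le1$ and $\sum_{j_1,j_2}\tilde\gamma^{2,2}_{j_1,j_2}(0)\le1$. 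For the sums of squares with $m$ unit exponents, I would introduce two independent copies $v,w$ of the vector $(v_k)_k$ drawn from the conditional law given $\|\text{X}-x\|^m=t$, and observe that the index sum factorizes, $\sum_{j_1,\dots,j_m}(\gamma^{1,\dots,1}_{j_1,\dots,j_m}(t))^2=\mathbb{E}_{v,w}[\langle v,w\rangle^m]$. Under this conditioning $\|v\|=\|w\|=t^{1/m}$, so Cauchy--Schwarz gives $|\langle v,w\rangle|^m\le(\|v\|\,\|w\|)^m=t^2$; hence the sum is at most $t^2$, and dividing by $t^2$ and passing to the limit with Fatou yields $\sum_{j_1,\dots,j_m}(\tilde\gamma^{1,\dots,1}_{j_1,\dots,j_m}(0))^2\le1$ for $m=1,2,3,4$.

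For part~3, note that for each $t>0$ the matrix $M(t)$ with $(j_1,j_2)$ entry $\gamma^{1,1}_{j_1,j_2}(t)=\mathbb{E}(v_{j_1}v_{j_2}\mid\|\text{X}-x\|^2=t)$ equals the conditional second-moment matrix $\mathbb{E}(vv^\top\mid\|\text{X}-x\|^2=t)$ and is therefore positive semidefinite. Since $M(0)=0$ and $[\Gamma]_{j_1,j_2}=\tilde\gamma^{1,1}_{j_1,j_2}(0)$ means $\Gamma=\lim_{t\downarrow0}M(t)/t$ entrywise, $\Gamma$ inherits positive semidefiniteness because the PSD cone is closed; this same structure gives the $m=2$ bound of part~2 via $\|M(t)\|_F^2\le\operatorname{tr}(M(t))^2=t^2$.

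I expect the main obstacle to lie in part~1. One must carefully interleave the $Y$-averaging that converts the random weight $J_\tau(\hat{F}_{h_F}(Y|x))$ into the deterministic constant $\mathbb{E}(J_\tau(F(Y|x))^q)$ with the simultaneous extraction of the $h^p$ factor from $\gamma$, and in particular verify that the first-order Taylor remainder of $\gamma$ at the origin is uniformly negligible on the kernel support $\{\|\text{X}-x\|\le h\}$ so that it is genuinely absorbed into the $(1+\smallo(1))$ term rather than contaminating the leading order. Parts~2 and~3, by contrast, reduce to clean algebraic identities once the two-copies device and the orthonormality of $\{\phi_k\}$ are in place.
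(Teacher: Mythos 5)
Your proof is correct and follows essentially the same route as the paper, which simply invokes the adapted Lemma~\ref{lemma:S8} and defers the rest to Lemma S.9 of the supplementary material of \cite{ferraty22}: your first-order expansion of $\gamma$ at $0$ (using $\gamma(0)=0$ and (A3)) fed into the Lemma~\ref{lemma:S8} integral is exactly the intended argument for part~1, and the Parseval/Cauchy--Schwarz/Fatou bounds and the closed-PSD-cone limit are the standard arguments for parts~2 and~3. In effect you supply the details the paper omits by reference, so nothing further is needed.
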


\begin{proof}
Using the adapted Lemma \ref{lemma:S8}, the rest of the proof is analogous to \cite[Lemma S.9]{ferraty22}.
\end{proof}

\begin{corollary}
\label{cor:upper_bound_alpha}
The following holds\begin{align*}
    &| \bbalpha^1_\tau(0|x,K)|  \leq ||\ProjK^\perp \xi_\tau^{'} ||,\quad
    \sum_{k=1}^K (\bbalpha^1_\tau(k|x,K))^2 \leq     ||\ProjK^\perp \xi_\tau^{'} ||^2,\quad
     \sum_{k=1}^K \bbalpha^2_\tau(k|x,K) \leq     ||\ProjK^\perp \xi_\tau^{'} ||^2\\
     &\bbbeta^1_\tau(0,x)  = \bigo(1),\quad
     \sum_{k=1}^K (\bbbeta^1_\tau(k,x))^2  = \bigo(1),\quad
     \sum_{k=1}^K \bbbeta^2_\tau(k,x)  = \bigo(1).
\end{align*}
\end{corollary}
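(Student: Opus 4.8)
The plan is to reduce all six estimates to Cauchy–Schwarz combined with the summability bounds collected in Lemma~\ref{lemma:S9}(2). Throughout I would write $a_k = \langle \phi_k, \xi'_{\tau,x}\rangle$, so that $\sum_{k>K} a_k^2 = \|\ProjK^\perp \xi'_{\tau,x}\|^2$, and record that the Hilbert–Schmidt hypothesis in (A1) gives $\sum_{k_1,k_2} \langle \xi''_{\tau,x}\phi_{k_1},\phi_{k_2}\rangle^2 = \|\xi''_{\tau,x}\|_{\mathrm{HS}}^2 < \infty$. The six bounds split into a first-derivative family (the $\bbalpha$ terms, controlled by $\|\ProjK^\perp\xi'_{\tau,x}\|$) and a second-derivative family (the $\bbbeta$ terms, controlled by $\|\xi''_{\tau,x}\|_{\mathrm{HS}}$ or $\|\xi''_{\tau,x}\|_{\mathrm{op}}$, hence $\bigo(1)$).

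The four ``linear-in-$\tilde\gamma$'' estimates are immediate. For $\bbalpha^1_\tau(0|x,K)$, Cauchy–Schwarz over $k>K$ together with $\sum_k (\tilde\gamma^1_k(0))^2 \le 1$ gives $|\bbalpha^1_\tau(0|x,K)| \le (\sum_{k>K} a_k^2)^{1/2} \le \|\ProjK^\perp\xi'_{\tau,x}\|$. For $\sum_{k=1}^K (\bbalpha^1_\tau(k|x,K))^2$, I would apply Cauchy–Schwarz inside each index $j$ and then sum over $j$, invoking $\sum_{j,k}(\tilde\gamma^{1,1}_{j,k}(0))^2 \le 1$, which yields the bound $\|\ProjK^\perp\xi'_{\tau,x}\|^2$. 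The two $\bbbeta^1$ estimates follow by the identical two steps with $\langle\xi''_{\tau,x}\phi_{k_1},\phi_{k_2}\rangle$ in place of $a_k$ and with the $(1,1)$ and $(1,1,1)$ summability bounds of Lemma~\ref{lemma:S9}(2), producing $\|\xi''_{\tau,x}\|_{\mathrm{HS}}$ and $\|\xi''_{\tau,x}\|_{\mathrm{HS}}^2$, i.e.\ $\bigo(1)$.

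The genuine obstacle is the two terms carrying a squared leading index, $\sum_k \bbalpha^2_\tau(k|x,K)$ and $\sum_k \bbbeta^2_\tau(k|x)$: here a blunt Cauchy–Schwarz over all summation indices would sum $a_{k_1}^2 a_{k_2}^2$ over $j\le K$ and introduce a spurious factor $K$. I would instead exploit the underlying quadratic-form structure. Writing $v = \mathrm{X}-x$ and summing the conditional moments reconstructs a nonnegative quantity,
\begin{align*}
\sum_{j=1}^K \sum_{k_1,k_2>K} a_{k_1}a_{k_2}\, \gamma^{2,1,1}_{j,k_1,k_2}(t) = \mathbb{E}\Big( \|\ProjK v\|^2 \,\langle \ProjK^\perp\xi'_{\tau,x}, v\rangle^2 \,\Big|\, \|v\|^4 = t \Big),
\end{align*}
where I used $\sum_{k_1,k_2>K} a_{k_1}a_{k_2}\langle\phi_{k_1},v\rangle\langle\phi_{k_2},v\rangle = \langle\ProjK^\perp\xi'_{\tau,x},v\rangle^2$ and $\sum_{j\le K}\langle\phi_j,v\rangle^2 = \|\ProjK v\|^2$. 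On the conditioning event $\|v\| = t^{1/4}$ one has $\|\ProjK v\|^2 \le \|v\|^2$ and $\langle\ProjK^\perp\xi'_{\tau,x},v\rangle^2 \le \|\ProjK^\perp\xi'_{\tau,x}\|^2\|v\|^2$, so the displayed function of $t$ is nonnegative, vanishes at $t=0$, and is bounded above by $\|\ProjK^\perp\xi'_{\tau,x}\|^2\, t$; differentiating at $t=0$ then gives $\sum_k\bbalpha^2_\tau(k|x,K) \le \|\ProjK^\perp\xi'_{\tau,x}\|^2$. The same device, now using $\sum_{k_1,k_2}\langle\xi''_{\tau,x}\phi_{k_1},\phi_{k_2}\rangle\langle\phi_{k_1},v\rangle\langle\phi_{k_2},v\rangle = \langle\xi''_{\tau,x} v, v\rangle$, conditioning on $\|v\|^6 = t$, and the operator-norm bound $\langle\xi''_{\tau,x}v,v\rangle^2 \le \|\xi''_{\tau,x}\|_{\mathrm{op}}^2\|v\|^4$, shows the reconstructed conditional moment is $\bigo(t)$ and hence $\sum_k\bbbeta^2_\tau(k|x) = \bigo(1)$.

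The one point requiring care—and the step I expect to be most delicate—is the interchange of the derivative $\tfrac{\de}{\de t}$ at $0$ with the infinite sums over $(k_1,k_2)$ (respectively $(k_1,\dots,k_4)$), which is precisely what legitimises equating $\sum_k\bbalpha^2_\tau(k|x,K)$ with the derivative at $0$ of the reconstructed conditional moment. I would justify this using the differentiability and uniform summability afforded by (A3), exactly as in the corresponding verification in the supplement of \cite{ferraty22}; the remaining manipulations are then the routine Cauchy–Schwarz steps above.
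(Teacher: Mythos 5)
Your proof is correct and follows essentially the same route as the paper, whose entire justification is ``Follows from Lemma~\ref{lemma:S9} and the Cauchy--Schwarz inequality'': the four linear bounds are exactly the Cauchy--Schwarz applications you describe. Your further observation that a blunt Cauchy--Schwarz would cost a spurious $\sqrt{K}$ factor on $\sum_{k}\bbalpha^2_\tau(k|x,K)$ and $\sum_{k}\bbbeta^2_\tau(k|x)$, and your replacement of it by the nonnegative quadratic-form reconstruction $\mathbb{E}\big(\|\ProjK v\|^2\,\langle\ProjK^\perp\xi'_{\tau,x},v\rangle^2\mid\cdot\big)$ (with the derivative--sum interchange justified via (A3)), is precisely what is needed to make the paper's one-line argument rigorous for those two terms.
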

\begin{proof}
Follows from Lemma \ref{lemma:S9} and the Cauchy-Schwartz inequality.
\end{proof}

\begin{lemma}
\label{lemma:others}
  Under the hypothesis (A1) -- (A5)
  \begin{enumerate}
      \item For all $j, j_1, j_2 \in \{1,\dots,K\}$ 
       \begin{align*}
       \delta_{\tau,j} &= \frac{C_\tau(1,1|x)}{C_\tau(0,1|x)} \tilde{\gamma}_j^1(0)h\{1 + \smallo_{\Pr} (1)\} + \bigo_{\Pr}(h \{n \pi_x(h) \}^{-1/2})\sqrt{\tilde{\gamma}^2_j(0)}\\
       [\Delta_\tau]_{j_1, j_2} & = \frac{C_\tau(2,1|x)}{C_\tau(0,1|x)} \tilde{\gamma}_{j_1, j_2}^{1,1}(0)h^2\{1 + \smallo_{\Pr} (1)\} + \bigo_{\Pr}(h^2 \{n \pi_x(h) \}^{-1/2})\sqrt{\tilde{\gamma}_{j_1, _2}^{2,2}(0)}\\
    \end{align*}
      \item For all $j\in \{1,\dots,K\}$ 
      \begin{align*}
         A_{\tau,0} & = \frac{C_\tau(1,1|x)}{C_\tau(0,1|x)} \bbalpha_\tau^1(0|x,K) h\{1 + \smallo (1)\} + \bigo_{\Pr}(||\ProjK^\perp \xi^{'}_{\tau,x} || h \{n \pi_x(h) \}^{-1/2})\\
        A_{\tau,j} & = \frac{C_\tau(2,1|x)}{C_\tau(0,1|x)} \bbalpha_\tau^1(j|x,K)h^2\{1 + \smallo (1)\} + \bigo_{\Pr}(\sqrt{\bbalpha_\tau^2(j|x,K)} h^2 \{n \pi_x(h) \}^{-1/2})
      \end{align*}
      \item
      For any $\mathbf{u},\mathbf{v} \in \mathbb{R}^K $, $| \mathbf{u}^{\top} \Gamma^{-1}\mathbf{v}| \leq \lambda_K^{-1} ||\mathbf{u} ||_2 ||\mathbf{v} ||_2 $, where $ \lambda_K $ is the smallest eigenvalue of $\Gamma$. 
       
      \item For all $j\in \{1,\dots,K\}$ 
   \begin{align*}
         B_{\tau,0} & = \frac{C_\tau(2,1|x)}{C_\tau(0,1|x)} \bbbeta_\tau^1(0|x) h^2\{1 + \smallo (1)\} + \bigo_{\Pr}(h^2\{n\pi_x(h)\}^{-1/2})\\
        B_{\tau,j} & = \frac{C_\tau(3,1|x)}{C_\tau(0,1|x)}  \bbbeta_\tau^1(j|x)h^3\{1 + \smallo (1)\} + \sqrt{\bbbeta^2_\tau(j|x)} \bigo_{\Pr}(h^3\{n\pi_x(h)\}^{-1/2})
      \end{align*}

  \end{enumerate}
    
\end{lemma}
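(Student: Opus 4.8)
The plan is to treat each of the six quantities $\delta_{\tau,j}$, $[\Delta_\tau]_{j_1,j_2}$, $A_{\tau,0}$, $A_{\tau,j}$, $B_{\tau,0}$, $B_{\tau,j}$ uniformly as a normalized sum of i.i.d.\ terms with a \emph{deterministic} normalizer, treating part~3 (the spectral bound) separately. Writing each quantity generically as $\tfrac{1}{n c_n}\sum_{i=1}^n Z_i$, with $Z_i$ i.i.d.\ and $c_n=\mathbb{E}(J_\tau(F(Y_1|x))L(h^{-1}\|X_1-x\|))$ or $c_n=\mathbb{E}(L(h^{-1}\|X_1-x\|))$ depending on whether the weight is present, I would decompose
$$\frac{1}{nc_n}\sum_{i=1}^n Z_i = \frac{\mathbb{E}(Z_1)}{c_n} + \frac{1}{nc_n}\sum_{i=1}^n (Z_i - \mathbb{E} Z_1),$$
so that the leading deterministic contribution is $\mathbb{E}(Z_1)/c_n$ and the remainder is mean zero with variance $\mathrm{Var}(Z_1)/(n c_n^2)$. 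Everything then reduces to evaluating the first and second moments of $Z_1$, for which Lemma~\ref{lemma:S9}, part~1, is exactly the right tool.

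For the leading terms I would condition on $\|X-x\|$ (and, where the weight is present, on $Y$), expanding the inner products against $\ProjK^\perp\xi'_{\tau,x}$ or $\xi''_{\tau,x}$ into their basis coefficients so that each summand becomes an expectation of the form $\mathbb{E}(L^q J_\tau\,\gamma^{p_1,\dots}_{j_1,\dots}(\|X-x\|^p))$. Lemma~\ref{lemma:S9}, part~1, converts each of these into $h^p C_\tau(p,q|x)\tilde\gamma^{p_1,\dots}_{j_1,\dots}(0)\pi_x(h)(1+\smallo(1))$, and re-summing the coefficients reproduces exactly the objects $\tilde\gamma^1_j(0)$, $\tilde\gamma^{1,1}_{j_1,j_2}(0)$, $\bbalpha^1_\tau(\cdot|x,K)$, $\bbbeta^1_\tau(\cdot|x)$ appearing in the statement; dividing by $c_n$, itself $C_\tau(0,1|x)\pi_x(h)(1+\smallo(1))$ or $C_{0.5}(0,1|x)\pi_x(h)(1+\smallo(1))$ via Corollary~\ref{cor:moments_L}, leaves the stated ratios of $C_\tau$ constants and the correct power of $h$. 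A small but useful observation to record here is that $C_\tau(p,q|x)$ factors as $\mathbb{E}(J_\tau(F(Y|x))^q)\,M(p,q|x)$ for a purely kernel-dependent integral $M(p,q|x)$, so every ratio with equal second argument cancels the $Y$-dependent factor; in particular $C_\tau(1,1|x)/C_\tau(0,1|x)=C_{0.5}(1,1|x)/C_{0.5}(0,1|x)$, which is why the weight-free terms $A_{\tau,\cdot}$ and $B_{\tau,\cdot}$ nonetheless carry $C_\tau$-ratios.

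For the remainders I would bound $\mathrm{Var}(Z_1)\le\mathbb{E}(Z_1^2)$ and apply Lemma~\ref{lemma:S9}, part~1, again, now with $q=2$ and doubled powers, producing the higher-order objects $\tilde\gamma^2_j(0)$, $\tilde\gamma^{2,2}_{j_1,j_2}(0)$, $\bbalpha^2_\tau(j|x,K)$, $\bbbeta^2_\tau(j|x)$ (the squared inner products being expanded and controlled via the Cauchy--Schwarz bounds of Corollary~\ref{cor:upper_bound_alpha}, which for $A_{\tau,0}$ yields the factor $\|\ProjK^\perp\xi'_{\tau,x}\|$). Each such second moment is of order $h^{2p}\pi_x(h)$ times the relevant quantity, so dividing by $n c_n^2\asymp n\pi_x(h)^2$ gives a variance of order $h^{2p}(\text{quantity})\{n\pi_x(h)\}^{-1}$; Chebyshev then turns the centered sum into an $\bigo_{\Pr}(h^{p}\sqrt{\text{quantity}}\,\{n\pi_x(h)\}^{-1/2})$ term, matching each stated remainder. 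Part~3 is immediate and deterministic: since $\Gamma$ is symmetric and positive semidefinite with smallest eigenvalue $\lambda_K>0$ by Lemma~\ref{lemma:S9}, part~3, together with (A3), one has $\|\Gamma^{-1}\|_{\mathrm{op}}=\lambda_K^{-1}$, whence $|\mathbf{u}^\top\Gamma^{-1}\mathbf{v}|\le\|\mathbf{u}\|_2\,\|\Gamma^{-1}\mathbf{v}\|_2\le\lambda_K^{-1}\|\mathbf{u}\|_2\|\mathbf{v}\|_2$ by Cauchy--Schwarz.

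The main obstacle is conceptual rather than computational and lies entirely upstream, in justifying that the random extremile weight $J_\tau(F(Y_i|x))$ — which couples the response and covariate randomness and, in practice, is a plug-in built from $\hat{F}_{h_F}$ — can be absorbed into the constants $C_\tau(p,q|x)$ without disturbing the leading order. This is precisely what Lemma~\ref{lemma:S8} secures, using the uniform consistency $R_{s(\tau)}\to1$ from \cite{Ferraty10} to show that the weight enters only through $\mathbb{E}(J_\tau(F(Y|x))^q)$. Granted Lemmas~\ref{lemma:S8} and~\ref{lemma:S9}, the proof of Lemma~\ref{lemma:others} becomes bookkeeping: matching powers of $h$, identifying which $\tilde\gamma$, $\bbalpha$ or $\bbbeta$ each coefficient sum collapses to, and reading off the fluctuation order from the corresponding $q=2$ moment.
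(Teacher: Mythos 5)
Your proposal is correct and follows essentially the same route as the paper, whose proof simply defers to the analogues in Ferraty et al.\ (Lemmas S.10--S.14) via Lemmas \ref{lemma:S8} and \ref{lemma:S9}: those arguments are precisely the mean-plus-centered-fluctuation decomposition with first moments from Lemma \ref{lemma:S9} ($q=1$), second moments ($q=2$) controlled by Chebyshev and the Cauchy--Schwarz bounds of Corollary \ref{cor:upper_bound_alpha}, and the spectral bound for part 3. Your observation that $C_\tau(p,q|x)$ factors into a $Y$-dependent and a kernel-dependent part, so that ratios with equal $q$ are $\tau$-free, is exactly the point that reconciles the unweighted $A$ and $B$ sums with the $C_\tau$-ratios in the statement.
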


\begin{proof}
Using Lemmas \ref{lemma:S8} and \ref{lemma:S9}, the rest of the proof is analogous to \cite[Lemmas S.10 -- S.14]{ferraty22}.
\end{proof}

\begin{lemma}
\label{lemma:elements_var}
Under the hypothesis (A1) -- (A5) and using Corollary \ref{cor:moments_L}, for all $j, j_1, j_2\in \{1,\dots,K\}$ 
       \begin{align*}
       \delta_0& = \frac{C_{0.5}(0,2|x)}{(C_{0.5}(0,1|x))^2}\{n \pi_x(h) \}^{-1}\{1 + \smallo_{\Pr} (1)\}\\
       \delta_j &= \frac{C_{0.5}(1,2|x)}{(C_{0.5}(0,1|x))^2} \tilde{\gamma}_j^1(0)h \{n \pi_x(h) \}^{-1} \{1 + \smallo_{\Pr} (1)\} + \bigo_{\Pr}(h \{n \pi_x(h) \}^{-3/2})\sqrt{\tilde{\gamma}^2_j(0)}\\
       [\Delta]_{j_1, j_2} & = \frac{C_{0.5}(2,2|x)}{(C_{0.5}(0,1|x))^2} \tilde{\gamma}_{j_1, j_2}^{1,1}(0)h^2  \{n \pi_x(h) \}^{-1} \{1 + \smallo_{\Pr} (1)\} + \bigo_{\Pr}(h^2 \{n \pi_x(h) \}^{-3/2})\sqrt{\tilde{\gamma}_{j_1, _2}^{2,2}(0)}.
    \end{align*}
\end{lemma}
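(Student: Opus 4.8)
The plan is to recognize the three quantities $\delta_0$, $\delta_j$ and $[\Delta]_{j_1,j_2}$ as the variance-type analogues of $\delta_{\tau,0}$, $\delta_{\tau,j}$ and $[\Delta_\tau]_{j_1,j_2}$ treated in Lemma~\ref{lemma:others}, obtained by setting the weight to unity (the case $\tau=0.5$, where $J_{0.5}\equiv 1$), replacing the kernel $L$ by its square $L^2$ in the numerators, and taking the normalization to be $(n\mathbb{E}(L(h^{-1}\|X_1-x\|)))^2$ rather than $n\mathbb{E}(L(h^{-1}\|X_1-x\|))$. Since in Lemma~\ref{lemma:S8} the kernel always appears in the form $[L(t)J_\tau(\cdot)]^q$, the factor $L^2$ is absorbed simply by taking $q=2$, so the entire machinery of Lemmas~\ref{lemma:S8} and~\ref{lemma:S9} applies with the effective kernel $L^2$. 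Each quantity is then handled by splitting its (random) numerator into its conditional mean, which produces the leading deterministic term, and a stochastic fluctuation controlled by a second-moment bound.

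First I would compute the common denominator. By Corollary~\ref{cor:moments_L} with $q=1$, $n\mathbb{E}(L(h^{-1}\|X_1-x\|)) = n\,C_{0.5}(0,1|x)\,\pi_x(h)\{1+\smallo(1)\}$, so the denominator equals $n^2 (C_{0.5}(0,1|x))^2 (\pi_x(h))^2\{1+\smallo(1)\}$ and is nonrandom. For $\delta_0$ the numerator $\sum_i (L(h^{-1}\|X_i-x\|))^2$ has expectation $n\,C_{0.5}(0,2|x)\,\pi_x(h)\{1+\smallo(1)\}$ by Corollary~\ref{cor:moments_L} with $q=2$; dividing yields the stated constant $C_{0.5}(0,2|x)/(C_{0.5}(0,1|x))^2$ and the rate $\{n\pi_x(h)\}^{-1}$. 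The relative error $\smallo_\Pr(1)$ then follows from a Chebyshev argument, using that the single-term variance is dominated by $\mathbb{E}((L)^4)=\bigo(\pi_x(h))$ (Corollary~\ref{cor:moments_L} with $q=4$), whence the numerator concentrates around its mean with relative fluctuation of order $\{n\pi_x(h)\}^{-1/2}=\smallo(1)$ under (A4). Because $C_{0.5}(0,2|x)>0$, no separate additive term is needed here.

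For $\delta_j$ and $[\Delta]_{j_1,j_2}$ I would proceed identically but invoke Lemma~\ref{lemma:S9}(1) to evaluate the inner-product-weighted expectations: conditioning on $\|X-x\|$ replaces $\langle\phi_j,X-x\rangle$ by $\gamma_j^1$ and $\langle\phi_{j_1},X-x\rangle\langle\phi_{j_2},X-x\rangle$ by $\gamma_{j_1,j_2}^{1,1}$. Applying Lemma~\ref{lemma:S9}(1) with $J_{0.5}\equiv 1$, $q=2$ and $p=1$ (respectively $p=2$) produces, after normalization, the leading terms with coefficients $C_{0.5}(1,2|x)\tilde\gamma_j^1(0)$ and $C_{0.5}(2,2|x)\tilde\gamma_{j_1,j_2}^{1,1}(0)$ and the powers $h$ and $h^2$. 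The additive stochastic error terms come from bounding the second moments of the summands, where $q$ doubles to $4$ and the power $p$ doubles (to $2$, respectively $4$): $\mathbb{E}(\langle\phi_j,X-x\rangle^2(L)^4)=\bigo(h^2\,\tilde\gamma_j^2(0)\,\pi_x(h))$ and $\mathbb{E}(\langle\phi_{j_1},X-x\rangle^2\langle\phi_{j_2},X-x\rangle^2(L)^4)=\bigo(h^4\,\tilde\gamma_{j_1,j_2}^{2,2}(0)\,\pi_x(h))$ via Lemma~\ref{lemma:S9}(1). Tracking the powers of $h$ and $\pi_x(h)$ through the Chebyshev bound, and dividing by the denominator, yields the advertised $\bigo_\Pr(h\{n\pi_x(h)\}^{-3/2})\sqrt{\tilde\gamma_j^2(0)}$ and $\bigo_\Pr(h^2\{n\pi_x(h)\}^{-3/2})\sqrt{\tilde\gamma_{j_1,j_2}^{2,2}(0)}$. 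These fluctuations are retained as \emph{additive} terms, rather than folded into the relative error, precisely because the leading coefficients $\tilde\gamma_j^1(0)$ and $\tilde\gamma_{j_1,j_2}^{1,1}(0)$ may vanish, in which case the fluctuation becomes the dominant contribution.

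The main obstacle is bookkeeping rather than conceptual: one must keep the denominator scale $(n\mathbb{E} L)^2\propto n^2(\pi_x(h))^2$ strictly distinct from the numerator scale $n\mathbb{E}(L^2)\propto n\pi_x(h)$, since it is exactly this mismatch that converts the $\bigo(1)$ behaviour of Lemma~\ref{lemma:others} into the variance rate $\{n\pi_x(h)\}^{-1}$. A secondary technical point is verifying that the second-moment bounds genuinely dominate the variances, so that the cross terms in $(\sum_i)^2$ contribute only at the order of the squared mean and the fluctuations are truly of lower order; this is where assumption (A4), ensuring $n\pi_x(h)\to\infty$, is used. Beyond these, the argument is entirely analogous to \cite[Lemmas S.10--S.14]{ferraty22} carried out with $L^2$ in place of $L$.
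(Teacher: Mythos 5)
Your proposal is correct and takes essentially the same route as the paper, which simply declares the proof ``analogous to part 1 of Lemma~\ref{lemma:others}'' (itself reduced to the mean-plus-Chebyshev-fluctuation arguments of Lemmas S.10--S.14 in \cite{ferraty22}). Your identification of the right parameter choices ($J_{0.5}\equiv 1$, $q=2$ for the leading terms and $q=4$, doubled $p$, for the second-moment bounds) and your bookkeeping of the $(n\mathbb{E}L)^2$ denominator versus the $n\mathbb{E}(L^2)$ numerator correctly reproduce the stated rates and additive error terms.
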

\begin{proof}
    Analogous to proof of 1. of Lemma \ref{lemma:others}.
\end{proof}

\subsubsection*{Local bandwidth}
\begin{lemma}
\label{lemma:kNN_pi}
Under (A7), a.s.
\begin{align*}
    \pi_x(h^k) \sim \frac{k}{n},\quad h^k \sim \pi^{-1}_x\left( \frac{k}{n}\right),
\end{align*}
where $h^k= \inf \bigg \{h \in H: \sum_{j=1}^n \boldsymbol{1}_{X_j \in B_h(x)} = k \bigg \}$.
\end{lemma}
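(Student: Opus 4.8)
The plan is to reduce the infinite-dimensional problem to a one-dimensional one by working with the scalar distances $Z_j = \|X_j - x\|$, $j=1,\dots,n$, whose common distribution function is exactly $G(\cdot)=\pi_x(\cdot)$. With this reduction, $h^k$ is (up to the resolution of the finite grid $H$, which I take to refine as $n$ grows) the $k$-th order statistic $Z_{(k)}$ of the sample $Z_1,\dots,Z_n$, i.e.\ the $k$-nearest-neighbour distance from $x$. The statement then becomes a classical fact about intermediate order statistics, and the proof mirrors Lemma~S.5 of \cite{ferraty22}; I would spell out the two ingredients separately.

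For the first equivalence I would first note that by the very definition of $h^k$ the empirical small-ball probability satisfies $\widehat{\pi}_x(h^k) = n^{-1}\sum_{j=1}^n \boldsymbol{1}_{X_j \in B_{h^k}(x)} = k/n$. The key step is the probability integral transform: since $\pi_x$ is continuous near $0$ by (A7), the variables $U_j = \pi_x(Z_j)$ are i.i.d.\ Uniform$(0,1)$, and $\pi_x(h^k) = \pi_x(Z_{(k)}) = U_{(k)}$, the $k$-th uniform order statistic. Because (A7) forces $k=k_n\to\infty$ with $k/n=\smallo(1)$ (this is implicit in the condition $\{\pi_x(k/n)\lambda_K\sqrt{k}\}^{-1}=\smallo(1)$, which cannot hold for bounded $k$ once $\pi_x(k/n)\to 0$), the strong law for intermediate order statistics gives $U_{(k)}/(k/n)\to 1$ a.s., i.e.\ $\pi_x(h^k)\sim k/n$ a.s. A Glivenko--Cantelli argument, using that $\{Z_j\le h\}$ is a monotone family so that $\sup_h |\widehat{\pi}_x(h)-\pi_x(h)|\to 0$ a.s., would deliver only the additive statement $\pi_x(h^k)=k/n+\smallo(1)$; the order-statistic route is what yields the sharper multiplicative equivalence.

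For the second equivalence I would transfer $\pi_x(h^k)\sim k/n$ through the inverse $\pi_x^{-1}$. Assumption (A5) states that $\pi_x$ is regularly varying at $0$ (with positive index, as is consistent with the monotonicity relations $\theta_x(s)\ge s$ and $\theta_x(s^{-1})\le s^{-1}$ in (A7), which also make $\pi_x$ eventually strictly increasing), so $\pi_x$ admits a regularly varying generalized inverse $\pi_x^{-1}$. Since regularly varying functions preserve the asymptotic equivalence of null sequences, applying $\pi_x^{-1}$ to both sides of $\pi_x(h^k)\sim k/n$ gives $h^k = \pi_x^{-1}(\pi_x(h^k))\sim \pi_x^{-1}(k/n)$ a.s., which is the claim.

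The main obstacle is the upgrade from an additive to a multiplicative ($\sim$) statement: because $k/n\to 0$, a uniform additive error of order $\smallo(1)$ is not automatically $\smallo(k/n)$, so Glivenko--Cantelli alone is insufficient and one genuinely needs the sharp almost-sure rate for uniform order statistics (equivalently, a law-of-the-iterated-logarithm-type control of the empirical process near the origin). A secondary technical point is justifying that the equivalence survives composition with $\pi_x^{-1}$, which is exactly where the regular-variation structure encoded in (A5) and the monotonicity constraints in (A7) enter; finally, one should check that replacing the continuum of radii by the finite grid $H$ contributes only a negligible error, which holds provided $H$ refines as $n$ grows.
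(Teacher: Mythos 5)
Your argument is correct and takes essentially the same route as the paper, which simply defers to Lemma~S.5 of the supplementary material of \cite{ferraty22}: reduce to the scalar distances $\|X_j-x\|$ with distribution function $\pi_x$, identify $\pi_x(h^k)$ with the $k$-th intermediate uniform order statistic via the probability integral transform to get $\pi_x(h^k)\sim k/n$ a.s., and then transfer through $\pi_x^{-1}$ using the regular-variation structure encoded in (A5)/(A7). The one caveat you correctly flag is also the only real technical point: the almost-sure ratio convergence $U_{(k)}/(k/n)\to 1$ requires $k$ to grow faster than $\log\log n$ (Kiefer's theorem), which the rate conditions in (A7) guarantee in the regimes considered.
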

\begin{proof}
    See proof of Lemma S.5 of supplementary materials to \cite{ferraty22}.
\end{proof}

\begin{lemma}
\label{lemma:S8_kNN}
Under (A2), (A3), (A5) and (A7), for $i \in \{ 1,\dots,n\}$, $p\geq0$, $q>0$, and $\tau \in (0,1)$, we have
$$    \int_\mathbb{R}\int_0^1 t^p\left[L(t)J_{\tau}(\hat{F}_{h_{F}}(y | x))\right]^q  \text{d}F_{||X_i - x ||/h^k_{i}}(t) \text{d}F_{Y}(y) =  C_{\tau}(p,q|x) \frac{k}{n}(1 + \smallo(1))$$
where $C_{\tau}(p,q|x)$ is defined in Lemma \ref{lemma:S8}.
\end{lemma}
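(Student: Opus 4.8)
The plan is to reproduce the proof of Lemma~\ref{lemma:S8} line for line, with the deterministic global bandwidth $h$ replaced throughout by the random local bandwidth $h^k_i$, and then to convert the resulting small-ball factor $\pi_x(h^k_i)$ into $k/n$ via Lemma~\ref{lemma:kNN_pi}. The only point that genuinely requires attention is whether each ingredient of that proof survives the passage from a deterministic to a data-dependent bandwidth.

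First I would carry out the same reduction as in Lemma~\ref{lemma:S8}. Taking $\tau\in(0,1/2]$ (the case $\tau\in[1/2,1)$ being symmetric), I write $J_\tau$ explicitly and use the uniform consistency of $\hat F_{h_F}$ from \cite{Ferraty10}, exactly as before, to replace $(1-\hat F_{h_F}(y|x))^{s(\tau)-1}$ by $(1-F(y|x))^{s(\tau)-1}$ up to a factor $R_{s(\tau)}(y)\to 1$ uniformly almost surely. This step is untouched by the choice of the outer bandwidth, so it carries over verbatim. The integral then factorizes into a distance part $\int_0^1 t^p L(t)^q\,\de F_{\|X_i-x\|/h^k_i}(t)$ and a response part $\mathbb{E}\big[s(\tau)^q(1-F(Y|x))^{q(s(\tau)-1)}\big]$; the latter is identical to the global case and supplies the expectation appearing in $C_\tau(p,q|x)$.

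The substantive step is the distance integral. Since $F_{\|X_i-x\|/h^k_i}(t)=\pi_x(t\,h^k_i)$, the integration-by-parts argument driven by (A2) and (A5) gives, just as in Lemma~\ref{lemma:S8}, a leading term $\big\{L^q(1)-\int_0^1 \frac{\de}{\de u}(u^p L^q(u))\vartheta_x(u)\,\de u\big\}\,\pi_x(h^k_i)\,(1+\smallo(1))$. To make this rigorous for the random bandwidth I would restrict to the almost-sure event of Lemma~\ref{lemma:kNN_pi}; there $h^k_i\to 0$ (because $k/n\to 0$ by (A7)) and the equivalences hold, so $h^k_i$ may be treated pathwise as a deterministic null sequence and the (A5)-based asymptotics apply. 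Applying Lemma~\ref{lemma:kNN_pi} to the leave-one-out subsample of size $n-1$ (whose neighbour count defines $h^k_i$) yields $\pi_x(h^k_i)\sim k/(n-1)\sim k/n$. Substituting this and collecting the two constants reproduces $C_\tau(p,q|x)$ and the claimed right-hand side.

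I expect the main obstacle to be precisely this rigorous transfer of the deterministic-bandwidth asymptotics to the data-dependent bandwidth $h^k_i$. Lemma~\ref{lemma:kNN_pi} is what resolves it: it licenses treating $h^k_i$ as asymptotically equivalent to the deterministic sequence $\pi_x^{-1}(k/n)$, so the small-ball estimates underlying Lemma~\ref{lemma:S8} can be invoked conditionally with no new analytic content beyond the identity $\pi_x(h^k_i)\sim k/n$.
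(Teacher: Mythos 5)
Your proposal is correct and follows essentially the same route as the paper: the $J_\tau(\hat F_{h_F})$ part is handled verbatim as in Lemma~\ref{lemma:S8}, while the distance integral is treated by exploiting that $h^k_i$ depends only on the leave-one-out sample (the paper conditions on $\mathcal{S}_{-i}$ and uses the law of total expectation, which is the rigorous form of your ``pathwise deterministic'' argument), followed by the small-ball ratio convergence and Lemma~\ref{lemma:kNN_pi} to replace $\pi_x(h^k_i)$ by $k/n$. The only cosmetic difference is that the paper invokes the uniform almost-sure convergence $\sup_{s\in[0,1]}|\pi_x(h^k s)/\pi_x(h^k)-\theta_x(s)|\to 0$ from the supplementary of \cite{ferraty22} rather than appealing to (A5) pathwise.
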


\begin{proof}
 Consider $\mathcal{S}_{-i} = \{X_1,\dots,X_{i-1},X_{i+1}, \dots,  X_n \}$ and write
\begin{align*}
    &\mathbb{E}(((h^k_{i})^{-1}||X_i - x ||)^p L^q((h^k_{i})^{-1}||X_i - x ||) | \mathcal{S}_{-i} )\\
    & = L^q(1)-\int_0^1 \frac{\de}{\text{d}u}(u^p L^q(u))\pi_x(h^k_{i}u)\text{d}u\\
    & = \pi_x(h^k_{i})\left\{ L^q(1)-\int_0^1 \frac{\de}{\text{d}u}(u^p L^q(u)) \frac{\pi_x(h^k_{i}u)}{\pi_x(h^k_{i})}\text{d}u\right \}\\
    & \sim \pi_x(h^k_{i})\left\{ L^q(1)-\int_0^1 \frac{\de}{\text{d}u}(u^p L^q(u)) \theta_x(u)\text{d}u\right \} \text{ a.s}, 
\end{align*}
where the last step comes from the fact that $\sup_{s\in[0,1]} \left\lvert\frac{\pi_x(h^ks)}{\pi_x(h^k)} - \theta_x(s) \right\rvert\to 0$ almost surely (see proof of Lemma S.6-(i) in the supplementary of \cite{ferraty22}). Thus, from Lemma \ref{lemma:kNN_pi} and the law of total expectation, we have
\begin{align*}
     &\mathbb{E}(((h^k_{i})^{-1}||X_i - x ||)^p L^q((h^k_{i})^{-1}||X_i - x ||) | \mathcal{S}_{-i} )\\
     &= \frac{k}{n}\left\{ L^q(1)-\int_0^1 \frac{\de}{\text{d}u}(u^p L^q(u)) \theta_x(u)\text{d}u\right \}(1 + \smallo(1)).
\end{align*}
Hence, similarly to the proof of Lemma \ref{lemma:S8}, we obtain the result
\begin{align*}
   &\int_\mathbb{R}\int_0^1 t^p\left[L(t)J_{\tau}(\hat{F}_{h_{F}}(y | x))\right]^q  \text{d}F_{||X_i - x ||/h^k_{i}}(t) \text{d}F_{Y}(y) \\
  & =  C_{\tau}(p,q|x) \frac{k}{n}(1 + \smallo(1)).
\end{align*}
\end{proof}
\begin{lemma}
\label{lemma:S9_kNN}
    Under (A2), (A3), (A5) and (A7), for $i \in \{ 1,\dots,n \}$, for any $q>0$, and $\tau \in (0,1)$, we have
    \begin{align*}
      &\mathbb{E}(L^q((h^k_{i})^{-1}(x-X)) J_{\tau}(\hat{F}_{h_{F}}(Y | x))\gamma_{j_1,\dots,j_K}^{p_1,\dots,p_K}(||X - x ||^p))\\
      &=\left(\pi_x^{-1}\left(\frac{k}{n}\right)\right)^p\, C_{\tau}(p,q|x)\, \tilde\gamma_{j_1,\dots,j_K}^{p_1,\dots,p_K}(0) \frac{k}{n}(1 + \smallo(1))
    \end{align*}
    with $p = \sum_{k=1}^K p_{k}$.
\end{lemma}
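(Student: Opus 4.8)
The plan is to follow the proof of the global-bandwidth statement in Lemma~\ref{lemma:S9}, replacing the role of Lemma~\ref{lemma:S8} by its kNN counterpart Lemma~\ref{lemma:S8_kNN}, and then propagating the two a.s.\ equivalences $h^k_i \sim \pi_x^{-1}(k/n)$ and $\pi_x(h^k_i)\sim k/n$ from Lemma~\ref{lemma:kNN_pi} into the final constant. First I would record the elementary but crucial fact that $\gamma^{p_1,\dots,p_K}_{j_1,\dots,j_K}(0)=0$: at $\|X-x\|=0$ we have $X=x$, so every factor $\langle\phi_{j_k},X-x\rangle^{p_k}$ with $p_k>0$ vanishes, and the conditional expectation defining $\gamma$ is zero. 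Under the differentiability in (A3), a first-order Taylor expansion about $0$ then gives $\gamma^{p_1,\dots,p_K}_{j_1,\dots,j_K}(t)=\tilde\gamma^{p_1,\dots,p_K}_{j_1,\dots,j_K}(0)\,t+\smallo(t)$ as $t\downarrow 0$.

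Next, since the kernel factor $L((h^k_i)^{-1}\|x-X\|)$ is supported on $\|x-X\|\le h^k_i$ and $h^k_i\to 0$ a.s.\ (by (A7) and Lemma~\ref{lemma:kNN_pi}), the argument $t=\|X-x\|^{p}$, with $p=\sum_k p_k$, is uniformly small on the support of the integrand. Substituting the Taylor expansion and writing $\|X-x\|^{p}=(h^k_i)^{p}\big((h^k_i)^{-1}\|X-x\|\big)^{p}$, the leading contribution becomes
$$\tilde\gamma^{p_1,\dots,p_K}_{j_1,\dots,j_K}(0)\,(h^k_i)^{p}\,\mathbb{E}\!\left[\big((h^k_i)^{-1}\|X-x\|\big)^{p}L^q\big((h^k_i)^{-1}\|x-X\|\big)J_\tau(\hat{F}_{h_{F}}(Y|x))\right],$$
while the $\smallo(t)$ remainder yields a term of strictly smaller order (it scales like $\smallo((h^k_i)^p)$ relative to the leading power). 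I would then invoke Lemma~\ref{lemma:S8_kNN} with these $p$ and $q$ to evaluate the bracketed expectation as $C_\tau(p,q|x)\,\tfrac{k}{n}(1+\smallo(1))$.

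As in the proof of Lemma~\ref{lemma:S8_kNN}, the care needed here is that $h^k_i$ is random, depending on the subsample $\mathcal{S}_{-i}=\{X_j\}_{j\ne i}$; I would therefore perform the computation conditionally on $\mathcal{S}_{-i}$, where $h^k_i$ is fixed, apply the S8\_kNN machinery, and only afterwards pass to the a.s.\ equivalences of Lemma~\ref{lemma:kNN_pi} via the law of total expectation. Combining the pieces and substituting $(h^k_i)^{p}\sim(\pi_x^{-1}(k/n))^{p}$ yields the claimed expression.

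The main obstacle is thus not the algebra but the rigorous control of the random bandwidth: one must ensure that the Taylor remainder remains negligible uniformly over the (data-dependent) support $\{\|x-X\|\le h^k_i\}$, and that the conditioning-then-averaging argument correctly reproduces the factor $k/n$ together with the power $(\pi_x^{-1}(k/n))^{p}$, exactly as the global-bandwidth proof produces $\pi_x(h)$ and $h^{p}$.
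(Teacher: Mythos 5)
Your proposal is correct and follows essentially the same route as the paper, which simply invokes Lemmas~\ref{lemma:kNN_pi} and~\ref{lemma:S8_kNN} and declares the rest analogous to Lemma~S.9 of the cited reference; your write-up merely fills in the details of that analogy (the observation that $\gamma(0)=0$, the first-order Taylor expansion of $\gamma$ about $0$ justified by the differentiability in (A3), the reduction to the moment computed in Lemma~\ref{lemma:S8_kNN} after conditioning on $\mathcal{S}_{-i}$, and the substitution $h^k_i\sim\pi_x^{-1}(k/n)$). No gaps.
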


\begin{proof}
    Using Lemmas \ref{lemma:kNN_pi} and \ref{lemma:S8_kNN}, the rest of the proof is analogous to \cite[Lemma S.9]{ferraty22}.
\end{proof}

\begin{lemma}
\label{lemma:elements_knn}
Under (A2), (A3), (A5) and (A7), for all $j, j_1, j_2=1,\dots,K$ we have
 \begin{align*}
       \delta^\ell_{\tau,j} &= \frac{C_\tau(1,1|x)}{C_\tau(0,1|x)} \tilde{\gamma}_j^1(0)\pi_x^{-1}\left(\frac{k}{n}\right)\{1 + \smallo_{\Pr} (1)\} + \bigo_{\Pr}\left(\pi_x^{-1}\left(\frac{k}{n}\right) k ^{-1/2}\right)\sqrt{\tilde{\gamma}^2_j(0)}\\
       [\Delta^\ell_\tau]_{j_1, j_2} & = \frac{C_\tau(2,1|x)}{C_\tau(0,1|x)} \tilde{\gamma}_{j_1, j_2}^{1,1}(0)\left(\pi_x^{-1}\left(\frac{k}{n}\right)\right)^2\{1 + \smallo_{\Pr} (1)\} + \bigo_{\Pr}\left(\left(\pi_x^{-1}\left(\frac{k}{n}\right)\right)^2k ^{-1/2}\right)\sqrt{\tilde{\gamma}_{j_1, _2}^{2,2}(0)}\\
        A^\ell_{\tau,0} & = \frac{C_\tau(1,1|x)}{C_\tau(0,1|x)} \bbalpha_\tau^1(0|x,K) \pi_x^{-1}\left(\frac{k}{n}\right)\{1 + \smallo (1)\} + \bigo_{\Pr}\left(||\ProjK^\perp \xi^{'}_{\tau,x} ||\pi_x^{-1}\left(\frac{k}{n}\right) k^{-1/2}\right)\\
        A^\ell_{\tau,j} & = \frac{C_\tau(2,1|x)}{C_\tau(0,1|x)} \bbalpha_\tau^1(j|x,K)\left(\pi_x^{-1}\left(\frac{k}{n}\right)\right)^2\{1 + \smallo (1)\} + \bigo_{\Pr}\left(\sqrt{\bbalpha_\tau^2(j|x,K)} \left(\pi_x^{-1}\left(\frac{k}{n}\right)\right)^2 k^{-1/2}\right)\\
         B^\ell_{\tau,0} & = \frac{C_\tau(2,1|x)}{C_\tau(0,1|x)} \bbbeta_\tau^1(0|x) \left(\pi_x^{-1}\left(\frac{k}{n}\right)\right)^2\{1 + \smallo (1)\} + \bigo_{\Pr}\left(\left(\pi_x^{-1}\left(\frac{k}{n}\right)\right)^2 k^{-1/2}\right)\\
        B^\ell_{\tau,j} & = \frac{C_\tau(3,1|x)}{C_\tau(0,1|x)}  \bbbeta_\tau^1(j|x)\left(\pi_x^{-1}\left(\frac{k}{n}\right)\right)^3\{1 + \smallo (1)\} + \sqrt{\bbbeta^2_\tau(j|x)} \bigo_{\Pr}\left(\left(\pi_x^{-1}\left(\frac{k}{n}\right)\right)^3 k^{-1/2}\right).
    \end{align*}
\end{lemma}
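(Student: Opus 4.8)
The plan is to prove each of the six expansions by transcribing the corresponding computation for the global bandwidth in Lemma~\ref{lemma:others}, with the fixed bandwidth $h$ replaced throughout by the data-driven kNN bandwidth. The whole argument is governed by Lemma~\ref{lemma:kNN_pi}, which gives the almost sure equivalences $h^k_i \sim \pi_x^{-1}(k/n)$ and $\pi_x(h^k_i) \sim k/n$. These dictate two substitutions: the bandwidth scale $h$ becomes $\pi_x^{-1}(k/n)$, and the effective sample size $n\pi_x(h)$ becomes $k$. Because the kNN analogue Lemma~\ref{lemma:S8_kNN} yields exactly the same constants $C_\tau(p,q|x)$ as its global counterpart Lemma~\ref{lemma:S8}, every leading-order coefficient is inherited verbatim from Lemma~\ref{lemma:others}, and only the two scale factors change.

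First I would write each of $\delta^\ell_{\tau,j}$, $[\Delta^\ell_\tau]_{j_1,j_2}$, $A^\ell_{\tau,0},\dots,B^\ell_{\tau,j}$ as a normalized empirical average and split it into a conditional mean plus a fluctuation. For the mean, the numerator of $\delta^\ell_{\tau,j}$ has per-observation expectation governed by Lemma~\ref{lemma:S9_kNN} with $p=1$, $q=1$, namely $\mathbb{E}(L((h^k_i)^{-1}(x-X))\,J_\tau(\hat{F}_{h_F}(Y|x))\,\gamma^1_j(\|X-x\|)) = \pi_x^{-1}(k/n)\,C_\tau(1,1|x)\,\tilde\gamma^1_j(0)\,(k/n)(1+\smallo(1))$; dividing by the normalizing per-observation expectation $C_\tau(0,1|x)(k/n)$ (the $p=0$, $q=1$ case) gives precisely the advertised leading term $\frac{C_\tau(1,1|x)}{C_\tau(0,1|x)}\tilde\gamma^1_j(0)\pi_x^{-1}(k/n)$. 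The remaining terms follow identically by taking $p\in\{2,3\}$ and the appropriate multi-index $\gamma$ in Lemma~\ref{lemma:S9_kNN}, which produces the $(\pi_x^{-1}(k/n))^2$ and $(\pi_x^{-1}(k/n))^3$ scalings together with the coefficients $\bbalpha^1_\tau$ and $\bbbeta^1_\tau$ exactly as in Lemma~\ref{lemma:others}.

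For the stochastic remainder, I would bound the conditional variance of each normalized sum. As the summands are i.i.d.\ after the leave-one-out conditioning described below, the variance equals $n$ times the per-observation second moment divided by the squared normalizer $\sim (C_\tau(0,1|x)k)^2$. The $q=2$ case of Lemma~\ref{lemma:S9_kNN} evaluates this second moment, producing the square-root factors $\sqrt{\tilde\gamma^2_j(0)}$, $\sqrt{\bbalpha^2_\tau(j|x,K)}$ and $\sqrt{\bbbeta^2_\tau(j|x)}$ and an overall rate $\bigo_{\Pr}(k^{-1/2})$ scaled by the relevant power of $\pi_x^{-1}(k/n)$; a Chebyshev argument then converts each variance bound into the stated $\bigo_{\Pr}$ fluctuation, exactly as in the proof of Lemma~\ref{lemma:others}.

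The main obstacle is the dependence introduced by the data-driven bandwidths: unlike the fixed global $h$, each $h^k_i$ is a function of the entire covariate sample, so the summands defining these quantities are not independent. The remedy, already used in the proof of Lemma~\ref{lemma:S8_kNN}, is to condition on the reduced sample $\mathcal{S}_{-i}=\{X_1,\dots,X_{i-1},X_{i+1},\dots,X_n\}$: under this conditioning $h^k_i$ is measurable while $X_i$ remains an independent draw, and the almost sure uniform convergence $\sup_{s\in[0,1]}|\pi_x(h^k s)/\pi_x(h^k)-\theta_x(s)|\to 0$ licenses replacing the random ratio $\pi_x(h^k s)/\pi_x(h^k)$ by its limit $\theta_x(s)$. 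Combining this with the law of total expectation and Lemma~\ref{lemma:kNN_pi} delivers the $k/n$ scaling; once this device is in place, each expansion is a routine transcription of Lemmas~S.10--S.14 of \cite{ferraty22} under the two substitutions above.
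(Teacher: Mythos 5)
Your proposal is correct and follows essentially the same route as the paper: the paper's proof simply invokes Lemmas~\ref{lemma:S8_kNN} and~\ref{lemma:S9_kNN} and declares the rest analogous to Lemmas S.10--S.14 of \cite{ferraty22}, which is exactly the transcription you carry out via the substitutions $h \mapsto \pi_x^{-1}(k/n)$ and $n\pi_x(h) \mapsto k$ justified by Lemma~\ref{lemma:kNN_pi}. Your explicit treatment of the dependence induced by the data-driven bandwidths through conditioning on $\mathcal{S}_{-i}$ is the same device the paper uses inside the proof of Lemma~\ref{lemma:S8_kNN}, so no new idea is needed and no step is missing.
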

\begin{proof}
Using Lemmas \ref{lemma:S8_kNN} and \ref{lemma:S9_kNN}, the rest of the proof is analogous to \cite[Lemma S.10, S.11, S.12, S.14]{ferraty22}.
\end{proof}

\subsection{Proofs of main theorems}
\label{sec:proof_main_thms}
In this section we extend the proofs by \cite{ferraty22} for local linear scalar-on-function regression to the extremile scalar-on-function regression. In particular, we provide the proofs for the asymptotic bias and variance expressions of the conditional extremile estimator.

\begin{proof}[Proof of Theorem 3.1]
    We require checking the order of magnitude of (17) and (18) in the manuscript. 
\begin{enumerate}
    \item First, we rewrite $T_1$ as
$$T_1 =  \boldsymbol{e}^{\top}({\boldsymbol{\Phi}}^{\top} \widetilde{\mathbf{W}}_\tau \boldsymbol{\Phi})^{-1}(A_{\tau,0},A_{\tau,1},\dots,A_{\tau,K})^{\top}$$
with $A_{\tau,0}$ and $A_{\tau,j}$, $j \in \{1,\dots,K\}$, defined in Section \ref{sec:notation}.
The matrix to invert is ${\boldsymbol{\Phi}}^{\top} \widetilde{\mathbf{W}}_\tau \boldsymbol{\Phi}$, where $ \widetilde{\mathbf{W}}_\tau = (n \mathbb{E}(J_\tau(F(Y|x)){L(h^{-1}(x-X))}))^{-1} \mathbf{W}_\tau$. Hence we have that
\begin{equation}
\label{eq:mat_to_inv}
    {\boldsymbol{\Phi}}^{\top} \widetilde{\mathbf{W}}_\tau \boldsymbol{\Phi} = \begin{bmatrix}
\delta_{\tau,0}& \delta^{\top}_{\tau}\\
\delta_\tau & \Delta_\tau
\end{bmatrix},
\end{equation}
with $\delta_\tau = (\delta_{\tau,1},\dots,\delta_{\tau,K})^{\top}$ and $\Delta$ denoting the $K \times K$ matrix whose entries are defined in Section \ref{sec:notation}.
With the inversion of the block matrix \eqref{eq:mat_to_inv} we get
\begin{equation}
\label{eq:inv_mat}
   \left ( {\boldsymbol{\Phi}}^{\top} \widetilde{\mathbf{W}}_\tau \boldsymbol{\Phi} \right)^{-1} = \begin{bmatrix}
(\delta_{\tau,0} - \delta_\tau^{\top} \Delta_\tau^{-1}\delta_\tau)^{-1} & -(\delta_{\tau,0} - \delta_\tau^{\top} \Delta_\tau^{-1}\delta_\tau)^{-1}\delta_\tau^{\top}\Delta_\tau^{-1}\\
-(\delta_{\tau,0} - \delta_\tau^{\top} \Delta_\tau^{-1}\delta_\tau)^{-1} \Delta_\tau^{-1} \delta_\tau & \Delta_\tau^{-1} + (\delta_{\tau,0} - \delta_\tau^{\top} \Delta_\tau^{-1}\delta_\tau)^{-1}\Delta_\tau^{-1}\delta_\tau \delta_\tau^{\top} \Delta_\tau^{-1}
\end{bmatrix},
\end{equation}
and hence
\begin{align*}
    T_1 = (\delta_{\tau,0} - \delta_\tau^{\top} \Delta_\tau^{-1}\delta_\tau)^{-1}(A_{\tau,0} - \delta_\tau^{T} \Delta_\tau^{-1} (A_{\tau,1},\dots,A_{\tau,K})^{\top}).
\end{align*}
One can see arguing as in \citep[Lemma S.1 (i)]{ferraty22} and using Lemma \ref{lemma:others} that
\begin{equation}
\label{eq:Delta_inv}
    \Delta_\tau^{-1} = \frac{C_\tau(0,1|x)}{C_\tau(2,1|x)} h^{-2} \Gamma^{-1}(1 + \smallo_{\Pr}(1)),
\end{equation}
where $\Gamma$ is a $K \times K$ defined in  Section \ref{sec:notation}. Then, as in \citep[Lemma S.1 (i)]{ferraty22} and using Lemmas \ref{lemma:S8}, \ref{lemma:S9} and \ref{lemma:others}, as well as the upper bounds in Corollary \ref{cor:upper_bound_alpha}, we see that
\begin{align}
\label{eq:conv_rate_mu}
    (\delta_{\tau,0} - \delta_\tau^{\top} \Delta_\tau^{-1}\delta_\tau)^{-1}& = \bigo_{\Pr}(\lambda_K)\\
    \nonumber
    (A_{\tau,0} - \delta_\tau^{T} \Delta_\tau^{-1} (A_{\tau,1},\dots,A_{\tau,K})^{\top}) & = \bigo_{\Pr}(h ||\ProjK^\perp \xi_\tau^{'}||) + \bigo_{\Pr}\left( \frac{h}{\sqrt{n \pi_x(h)}}||\ProjK^\perp \xi_\tau^{'}||\right)\\
    \nonumber
    &- \bigo_{\Pr}(\lambda_K^{-1} h ||\ProjK^\perp \xi_\tau^{'}||) + \bigo_{\Pr}\left(\lambda_K^{-1} \frac{h}{\sqrt{n \pi_x(h)}}||\ProjK^\perp \xi_\tau^{'}||\right) .
\end{align}
Hence, $T_1 = \bigo_{\Pr}(h ||\ProjK^\perp \xi_\tau^{'}||)$.

\item We rewrite $T_2$ as
\begin{align*}
    T_2 = \boldsymbol{e}^{\top}({\boldsymbol{\Phi}}^{\top} \widetilde{\mathbf{W}}_\tau \boldsymbol{\Phi})^{-1}\left[(B_{\tau,0},B_{\tau,1},\dots,B_{\tau,K})^{\top} + (C_{\tau,0},C_{\tau,1},\dots,C_{\tau,K})^{\top} \right],
\end{align*}
and once again we use inverse matrix \eqref{eq:inv_mat} to obtain 
\begin{align*}
    T_2 &= (\delta_{\tau,0} - \delta_\tau^{\top} \Delta_\tau^{-1}\delta_\tau)^{-1}(B_{\tau,0} - \delta_\tau^{T} \Delta_\tau^{-1} (B_{\tau,1},\dots,B_{\tau,K})^{\top}) \\
    &+ (\delta_{\tau,0} - \delta_\tau^{\top} \Delta_\tau^{-1}\delta_\tau)^{-1}(C_{\tau,0} - \delta_\tau^{T} \Delta_\tau^{-1} (C_{\tau,1},\dots,C_{\tau,K})^{\top}).
\end{align*}
Following similar arguments arguments to the ones used for $T_1$, with Lemmas \ref{lemma:S8}, \ref{lemma:S9}, \ref{lemma:others}, and Corollary \ref{cor:upper_bound_alpha}, we have that 
\begin{align*}
   &(B_{\tau,0} - \delta_\tau^{T} \Delta_\tau^{-1} (B_{\tau,1},\dots,B_{\tau,K})^{\top})\\
   & = \bigo_{\Pr}(h^2) + \bigo_{\Pr}\left(\frac{h^2}{\sqrt{n\pi_x(h)}} \right) - \bigo_{\Pr}(\lambda_K^{-1} h^2) + \bigo_{\Pr}\left(\lambda_K^{-1} \frac{h^2}{\sqrt{n\pi_x(h)}} \right). 
\end{align*}
Moreover, thanks to (A1), $C_{\tau,0} = \bigo_{\Pr}(h^3)$ and $C_{\tau,j} = \bigo_{\Pr}(h^4)$, for $j \in \{1,\dots,K\}$, hence $(C_{\tau,0} - \delta_\tau^{T} \Delta_\tau^{-1} (C_{\tau,1},\dots,C_{\tau,K})^{\top}) = (1 - h\delta_\tau^{\top} \Delta_\tau^{-1}(1,\dots,1)^{\top})\bigo_{\Pr}(h^3)$. 
Thanks to \eqref{eq:Delta_inv} we have that 
$$ \delta_\tau^{\top} \Delta_\tau^{-1}(1,\dots,1)^{\top} \propto \frac{1}{h^2} \delta_\tau^{\top} \Gamma^{-1}(1,\dots,1)^{\top},$$ and thanks to point 3. of Lemma \ref{lemma:others}, we obtain
\begin{align*}
    | \delta_\tau^{\top} \Gamma^{-1}(1,\dots,1)^{\top} | \leq \lambda_K^{-1} \sqrt{K} || \delta_\tau||_2.
\end{align*}
Since $|| \delta_\tau||_2 = \bigo_{\Pr}(h)$ because of point 1. Lemma \ref{lemma:others}, we finally obtain
\begin{align*}
    (C_{\tau,0} - \delta_\tau^{T} \Delta_\tau^{-1} (C_{\tau,1},\dots,C_{\tau,K})^{\top}) = \bigo_{\Pr}(\lambda_K^{-1} \sqrt{K} h^3).
\end{align*}
Using assumption $h\sqrt{K} = \smallo(1)$ in (A4), we finally get that $T_2 = \bigo_{\Pr}(h^2) + \smallo_{\Pr}(h^2)$.
\end{enumerate}
The two above points prove the desired result.
\end{proof}

\begin{proof}[Proof of Theorem 3.2]
    First consider 
    \begin{equation}
\label{eq:mat_in_var}
    {\boldsymbol{\Phi}}^{\top} \widetilde{\mathbf{W}^2} \boldsymbol{\Phi} = \begin{bmatrix}
\delta_{0}& \delta^{\top}\\
\delta & \Delta
\end{bmatrix},
\end{equation}
    where $\delta = (\delta_1,\dots, \delta_K)^{\top}$ and $\Delta$ denoting the $K \times K$ matrix whose entries are defined in Section \ref{sec:notation}. We can then rewrite (20) in the manuscript as
\begin{align}
    &\textup{Var}(\hat{\alpha}_\tau|\boldsymbol{X})\\
    & = [\sigma^2_\tau(x) + \smallo(1)](\delta_{\tau,0} - \delta_\tau^{\top} \Delta_\tau^{-1}\delta_\tau)^{-2}(\delta_0 - \delta_\tau^{\top}\Delta_\tau^{-1}\delta - \delta^{\top}\Delta_\tau^{-1}\delta_\tau + \delta_\tau^{\top}\Delta_\tau^{-1} \Delta \Delta_\tau^{-1}\delta_\tau).\nonumber
\end{align}
From \eqref{eq:conv_rate_mu} we know that 
\begin{align*}
    (\delta_{\tau,0} - \delta_\tau^{\top} \Delta_\tau^{-1}\delta_\tau)^{-2} = \bigo_{\Pr}(\lambda_K^2),
\end{align*}
and from Lemmas \ref{lemma:others} and \ref{lemma:elements_var}, we obtain
\begin{align*}
    (\delta_0 - \delta_\tau^{\top}\Delta_\tau^{-1}\delta - \delta^{\top}\Delta_\tau^{-1}\delta_\tau + \delta_\tau^{\top}\Delta_\tau^{-1} \Delta \Delta_\tau^{-1}\delta_\tau) = \bigo_{\Pr} \left(\frac{1}{n\pi_x(h)} \right) + \bigo_{\Pr}  \left(\frac{\lambda_K^{-1}}{n\pi_x(h)} \right),
\end{align*}
proving the result.
\end{proof}

\begin{proof}[Proof of Theorem 3.3]
    Using Lemmas \ref{lemma:S8_kNN}, \ref{lemma:S9_kNN} and \ref{lemma:elements_knn}, the proof is obtained similarly to the one of Theorems 3.1 and 3.2.
\end{proof}

\section{Quantile Regression with Functional Predictors}
\label{sec:appendix_qr}

For comparison against the proposed extremile regression model, in our simulation study we implemented a standard quantile regression with functional predictors. 

Consider observations $(Y_i, X_i(\cdot))_{i=1,..,n}$, and assume that the functional observations are zero mean without loss of generality. We estimate the eigenfunctions $\{ \phi_k(\cdot)\}_{k=1}^K$ of the functional setting via FPCA, so that we can write the finite representation od each $X_i(\cdot)$
$$X_i(s) \approx\sum_{k=1}^K c_{ik} \phi_k(s),$$
with $ c_{ik} = \int_{\mathcal{S}} X_i(s) \phi_k(s) \,\text{d}s$. The number of components, $K$, is chosen to explain a predetermined proportion of the total variance, in our simulation study is 95\%.

Then, for a given quantile level $\tau \in (0, 1)$, the conditional quantile of the response, $Q_{Y_i}(\tau | X_i)$ is
\begin{equation}
\label{eq:quat_reg}
    Q_{Y_i}(\tau | X_i) = \int_\mathcal{S} \beta(s) X_i(s) \,\text{d}s \approx \sum_{k=1}^{K} c_{ik} \int_\mathcal{S} \beta(s) \phi_k(s) \,\text{d}s.
\end{equation}
If the unknown functional coefficient is represented with the same basis functions, as in \eqref{eq:beta_approx}, the conditional quantile \eqref{eq:quat_reg} becomes
\begin{align*}
    Q_{Y_i}(\tau | X_i)  \approx \sum_{k=1}^{K} c_{ik} b_k,
\end{align*}
which is a linear quantile model with scalar covariates. 

The vector of coefficients, $(b_1(\tau), \dots, b_K(\tau))^T$, is estimated by minimizing the sum of asymmetrically weighted absolute residuals:
\begin{align*}
   \left(\widehat{b}_1(\tau), \dots, \widehat{b}_K(\tau)\right) = \underset{(b_1, \dots, b_K) \in \mathbb{R}^{K}}{\arg\min} \sum_{i=1}^{n} \rho_\tau \left( Y_i - \sum_{k=1}^{K} c_{ik} b_k \right).
\end{align*}
where $\rho_\tau(u)$ is the quantile regression check function:
\begin{align*}
    \rho_\tau(u) = u(\tau - \mathbb{I}(u < 0)) = 
    \begin{cases} 
        \tau u, & \text{if } u \ge 0, \\ 
        (\tau-1)u, & \text{if } u < 0 ,
    \end{cases}
\end{align*}
with $\mathbb{I}(\cdot)$ being the indicator function. This entire estimation procedure is repeated independently for each quantile level $\tau$ specified in the study.

The procedure was carried out in \texttt{R} using the function \texttt{fpca.face} in the package \texttt{refund} \citep{refund} for the first stage, and the \texttt{quantreg} package \citep{quantreg} for the second.

\end{document}